\newcommand{\hide}[1]{}
\newcommand{\QED}{\hfill$\qed$}
\newtheorem{algorithm}{Algorithm}
\newtheorem{proposition}{Proposition}
\title{Improved Algorithms for Computing $k$-Sink on Dynamic Path Networks}%
\titlerunning{$k$-sink on dynamic path networks}
\author[1]{Binay Bhattacharya}
\author[2]{Mordecai J. Golin}
\author[3]{Yuya Higashikawa}
\author[4]{Tsunehiko Kameda}
\author[5]{Naoki Katoh}
\affil[1,4]{School of Computing Science, Simon Fraser University,
  Vancouver, Canada\\
  \texttt{binay@sfu.ca, tikokameda@gmail.com}}
  \affil[2]{Dept. of Computer Science, Hong Kong Univ. of Science and Technology, Hong Kong\\
  \texttt{golin@cse.ust.hk}}
  \affil[3]{Dept. of Information and System Engineering, Chuo University, Tokyo, Japan \\
  \texttt{higashikawa.874@g.chuo-u.ac.jp}}
  \affil[5]{School of Science and Technology, Kwansei Gakuin University, Hyogo, Japan \\
  \texttt{naoki.katoh@gmail.com}}
\authorrunning{B.~Bhattacharya~et~al.}
\subjclass{F.2.2}
\keywords{Facility location, $k$-sink, parametric search, dynamic path network}
\begin{document}

\maketitle

\begin{abstract}
We present a novel approach to finding the $k$-sink
on dynamic path networks with general edge capacities.
Our first algorithm runs in $O(n\log n + k^2\log^4 n)$ time, 
where $n$ is the number of vertices on the given path,
and our second algorithm runs in $O(n\log^3 n)$ time.
Together, they improve upon the previously most efficient $O(kn\log^2n)$ time algorithm due to
Arumugam et al.~\cite{arumugam2016}
for all values of $k$.
In the case where all the edges have the same capacity,
we again present two algorithms that run in $O(n + k^2\log^2n)$ time
and $O(n\log n)$ time, respectively,
and they together improve upon the previously best $O(kn)$ time
algorithm due to Higashikawa et al.~\cite{higashikawa2015a}
for all values of $k$.
\end{abstract}


\section{Introduction}\label{sec:intro}
Investigation of evacuation problems dates back many years~\cite{hamacher2002,mamada2002}.
The $k$-sink problem is to locate $k$ sinks in such a way that every evacuee can
evacuate to a sink as quickly as possible,
when disasters, such as earthquakes and tsunamis, strike.
The problem can be modeled by a network whose vertices represent the places where
the evacuees are initially located and the edges represent possible evacuation routes.
Associated with each edge is the transit time across it in either direction and its capacity
in terms of the number of people who can enter it per unit time~\cite{hamacher2002}.
Madama et al.~\cite{mamada2006} solved this problem for the dynamic tree networks in $O(n\log^2 n)$ time
under the condition that only a vertex can be a sink.
For the 1-sink problem in the dynamic tree networks with uniform edge capacities,
Higashikawa et al. proposed an $O(n\log n)$ algorithm~\cite{higashikawa2014f}
with the condition that the sink can be either at a vertex or on an edge.

On dynamic path networks with uniform edge capacities,
it is straightforward to compute the 1-sink  in linear time~\cite{cheng2013}.
The $k$-sink problem for dynamic path networks with general 
and uniform edge capacities was solved in $O(kn \log^2n)$ time by Arumugam et al.~\cite{arumugam2016}
and in $O(kn)$ time by Higashikawa et al.~\cite{higashikawa2015a}, respectively.

In this paper we present two algorithms for the $k$-sink problem for the dynamic path networks
with general edge capacities.
Together, they outperform all other known algorithms.
We also present two algorithms for the dynamic path networks
with uniform edge capacities.
All our algorithms consists of two levels: feasibility tests at the lower level,
and optimization at the higher level, making use of feasibility tests.
Our results presented in this paper are the first algorithms that run in sub-quadratic time
in $n$,
regardless of the value of $k$, which can grow with $n$.

This paper is organized as follows.
In the next section, we define our model and the terms that are used throughout
the paper.
In Sec.~\ref{sec:overview}, we give an overview of our algorithm.
Sec.~\ref{sec:capacitated} introduces a data structure called the critical cluster tree,
which plays a central role in the rest of the paper.
In Sec.~\ref{sec:2tasks}, we identify two important tasks that form building blocks of
our algorithms.
and also discuss feasibility test.
Sec.~\ref{sec:optimization} presents several algorithms for uniform and general edge capacities.
Finally, Sec.~\ref{sec:conclusion} concludes the paper.


\section{Preliminaries}\label{sec:prelim}
Let $P = (V,E)$ be a path network,
whose vertices $v_1,v_2,\ldots, v_n$ are arranged from left to right
in this order.
For $i=1,2,\ldots, n$,
vertex $v_i$ has an integral weight $w_i (>0)$,
representing the number of evacuees,
and each edge $e_i=(v_i,v_{i+1})$ has a fixed non-negative length (distance)
$l_i$ and {\em capacity} $c_i$.
We assume that all evacuees from a vertex evacuate to the same sink.
We also assume that a sink has infinite capacity,
so that the evacuees coming from the left and right of a sink do not interfere with each other.
By $x\in P$, we mean that point $x$ lies on either an edge or a vertex of $P$.
For a vertex $v$, $v^+$ (resp. $v^-$) denotes the point just to the right (resp. left)
of vertex $v$ that is arbitrarily close to $v$.
For $a,b\in P$, $a\prec b$ or $b\succ a$ means that $a$ lies to the left of $b$.
Let $d(a,b)$ denote the distance between $a$ and $b$.
If $a$ and/or $b$ lies on an edge, we use the prorated distance.
The transit time for a unit distance is denoted by $\tau$,
so that it takes $d(a,b)\tau$ time to travel from $a$ to $b$.
Let $c(a,b)$ denote the minimum capacity of the edges on the subpath
of $P$ between $a$ and $b$.
Let $V[a,b]$ denote the set of vertices on the path from $a$ to $b$.
The subpath from $a$ to $b$, including $a$ and $b$,
is denoted by $P[a,b]$.
If $a$ (resp. $b$) is excluded, we use $P(a,b]$ (resp. $P[a,b)$).
Define
\begin{eqnarray}
W[v_i,v_j] &=& \sum_{v_l\in V[v_i,v_j]} w_l. \label{eqn:weightarray}
\end{eqnarray}
Clearly $W[v_i,v_j]$ can be computed in constant time once we construct the array
$\{W[v_1,v_j]\mid j=1,2,\ldots, n\}$ in $O(n)$ time.

Given a subpath $P[v_i,v_j]$ of a dynamic path network $P$ and a sink $s \in P[v_i,v_j]$,
let $\Theta(s,[v_i,v_j])$ denote the evacuation time to $s$ for the evacuees on $P[v_i,v_j]$.
We also define the {\em L-cost} (resp. {\em R-cost}) of vertex $v_h\in V[v_i,v_j]$
in $P[v_i,v_j]$,
as seen from $s\succ v_j$ (resp. $s\prec v_i$)
to be the least evacuation time to $s$ for all the evacuees on the vertices on 
$P[v_i,v_h]$ (resp. $P[v_h,v_j]$),
assuming that they all arrive at $s$ as continuously as possible.
For any vertex $v_h \in V[v_i,v_j]$, its L-cost and R-cost are thus
\begin{eqnarray}
\theta_L(s,[v_i,v_h]) &=& d(v_h,s) \tau + \frac{W[v_i,v_h]}{c(v_h,s)} \ \mbox{~for~} \ x\succ v_j, \label{eqn:leftcost3}\\
\theta_R(s,[v_h,v_j]) &=& d(s, v_h)\tau + \frac{W[v_h,v_j]}{c(s,v_h)} \ \mbox{~for~} \ x\prec v_i. \label{eqn:rightcost3}
\end{eqnarray}
Note that each of these functions is linear in the distance to $s$.
\begin{lemma}\label{lem:Thetas-capacitated}{\rm \cite{arumugam2016,higashikawa2014e}}
Given a subpath $P[v_i,v_j]$ of a dynamic path network $P$ and a sink $s \in P[v_i,v_j]$,
$\Theta(s,[v_i,v_j])$ is represented by the following formula:
\begin{eqnarray}\label{eqn:completion-timeB}
\Theta(s,[v_i,v_j]) = \max \left\{ \max_{v_h \in V[v_i,s]} \theta_L(s, [v_i, v_h]), \max_{v_h \in V[s,v_j]} \theta_R(s, [v_h, v_j]) \right\}. \label{eqn:cost1}
\end{eqnarray}
\end{lemma}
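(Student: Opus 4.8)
The plan is to prove \eqref{eqn:completion-timeB} by showing that the evacuation of $P[v_i,v_j]$ splits cleanly into two non-interfering halves at $s$ — everyone in $V[v_i,s]$ moves rightward to $s$, everyone in $V[s,v_j]$ moves leftward to $s$ — and then to analyze one half (say the left) and argue the maximum over prefixes $P[v_i,v_h]$ of the single-group cost $\theta_L(s,[v_i,v_h])$ is exactly the completion time for that half. The non-interference is immediate from the stated assumption that a sink has infinite capacity, so it suffices to treat a one-sided subpath $P[v_i,s]$ with $s$ at its right endpoint.

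First I would fix the left half and think of the evacuees on $v_i,\dots$ as fluid released at time $0$ that flows right toward $s$. The key structural observation is a bottleneck/min-cut argument: for any vertex $v_h\in V[v_i,s]$, all $W[v_i,v_h]$ evacuees originating at or left of $v_h$ must pass through every edge of $P[v_h,s]$, and the rate at which they can cross that stretch is at most $c(v_h,s)$, the minimum edge capacity on it. Hence the last of these evacuees cannot reach $s$ before time $d(v_h,s)\tau$ (the travel time, achievable only if there were no congestion) plus the queueing delay $W[v_i,v_h]/c(v_h,s)$ forced by the bottleneck; this gives $\Theta(s,[v_i,s])\ge \theta_L(s,[v_i,v_h])$ for every $v_h$, and therefore $\Theta(s,[v_i,s])\ge \max_{v_h\in V[v_i,s]}\theta_L(s,[v_i,v_h])$. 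The symmetric argument on the right half gives the matching lower bound with $\theta_R$, and combining the two halves yields ``$\ge$'' in \eqref{eqn:completion-timeB}.

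For the reverse inequality I would exhibit (or invoke the known dynamic-flow description of) an evacuation schedule meeting the bound: send evacuees greedily, each moving toward $s$ at full speed whenever possible and otherwise waiting in FIFO order at the congested edge. One then argues by induction on the edges of $P[v_i,s]$, processed from left to right, that the time at which the traffic on edge $e_h$ is fully cleared is at most $\max_{l\le h}\theta_L(s,[v_i,v_l])$ evaluated at the right endpoint of $e_h$; when $h$ reaches the edge incident to $s$ this gives completion time at most $\max_{v_h\in V[v_i,s]}\theta_L(s,[v_i,v_h])$. Doing the same on the right half and taking the max over the two independent halves closes the argument. Since the precise form \eqref{eqn:completion-timeB} (and the underlying lemma on $\theta_L,\theta_R$) is exactly what is cited from \cite{arumugam2016,higashikawa2014e}, I would at this point simply cite those works for the schedule-construction side; the only part worth writing out in full is the min-cut lower bound, which is clean.

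The main obstacle is the ``$\le$'' direction: making the greedy FIFO schedule rigorous requires carefully tracking how a congestion wave at one edge propagates to the next — in particular verifying that the clearance time of edge $e_h$ is governed by $\max_{l\le h}\theta_L$ and not by some more complicated interaction of several bottlenecks simultaneously. I expect this to reduce to showing that if $e_h$ has capacity at least $c(v_l,s)$ for the dominating index $l$, then $e_h$ never becomes the binding bottleneck, so the clearance front only ever ``restarts'' at edges that lower the running minimum capacity; this monotonicity is the heart of the matter and is precisely the point the cited references establish.
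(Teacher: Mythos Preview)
The paper does not prove this lemma at all: it is stated with a citation to \cite{arumugam2016,higashikawa2014e} and no argument is given. Your plan therefore already goes well beyond what the paper does, and the two-step outline you give (non-interference at $s$ from the infinite-capacity assumption, then a min-cut lower bound plus a greedy/FIFO schedule for the matching upper bound) is exactly the standard argument in the cited references, so there is nothing to correct.
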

A problem instance is said to be {\em $(t,k)$-feasible}
if exist $k$ sinks such that every evacuee can reach a sink within time $t$.
In our algorithms proposed in this paper,
we perform preprocessing to construct a useful data structure,
which makes $(t,k)$-feasibility test efficient.


\section{Overall strategy of our algorithm}
\label{sec:overview}
To carry out $(t,k)$-feasibility test, we repeatedly solve the following problems:

\medskip
\noindent
$L$-$test(P[v_i,v_j],t)$: It returns {\bf yes} if  every evacuee on a subpath $P[v_i,v_j]$ can reach a sink $v_j^+$ within time $t$. Otherwise it returns {\bf no}.
\medskip

\noindent
$R$-$test(P[v_i,v_j],s,t)$: It returns {\bf yes} if every evacuee on a subpath $P[v_i,v_j]$ can reach a sink $s$ within time $t$ where $s$ is located on $(v_{i-1}, v_i]$. Otherwise it returns {\bf no}.
\medskip

As will be seen in Sec.~\ref{sec:capacitated}, each of $L$-$test(P[v_i,v_j],t)$ and $R$-$test(P[v_i,v_j],s,t)$ can be done in $O(\log^2 n)$ time after constructing the data structure (called {\it critical cluster tree}). The critical cluster tree is a balanced binary search tree with height $O(\log n)$. 
\medskip

\subsection{Feasibility test}
The $(t,k)$-feasibility can be tested as follows:
We first compute
\begin{equation}
l_1= \max\{ j \mid 1 \le j \le n, \ L\mbox{-}test(P[v_1,v_j],t) \mbox{ is "yes"}\}.
\end{equation}
and find a sink $s_1 \in (v_{l_1}, v_{l_1+1}]$.
We then compute
\begin{equation}
r_1= \max\{ j \mid l_1+1 \le j \le n, \ R\mbox{-}test(P[v_{l_1+1},v_j],s_1,t) \mbox{ is "yes"}\}.
\end{equation}
Repeating this procedure, if we eventually obtain $r_k=n$, $(t,k)$-feasibility test succeeds. If $r_k< n$, it fails.
In fact, as will be seen in Sec.~\ref{sec:feasibility}, $(t,k)$-feasibility test can be done in $O(k\log^3 n)$ time.

Using $(t,k)$-feasibility test as a subroutine, we can find a minimum value $t^*$ such that $(t^*,k)$-feasibility test succeeds, which gives us an optimal evacuation time. This can be done by executing $(t,k)$-feasibility tests in binary search fashion.

\subsection{Machineries in the data structure}
The key idea is to use the balanced binary search tree $\cal T$ with appropriate information stored at each node of the tree 
which enables us to execute each of $L$-$test(P[v_i,v_j],t)$ and $R$-$test(P[v_i,v_j],s,t)$ in $O(\log^2 n)$ time.

For $L$-$test(P[v_i,v_j],t)$, we need to compute 
\begin{equation}\label{eq:query1}
\Theta_L(v_i,v_j)=\max_{v_h \in V[v_i,v_j]} \theta_L(v_j^+, [v_i, v_h]).
\end{equation}
Also for $R$-$test(P[v_i,v_j],s,t)$, we need to compute 
\begin{equation}\label{eq:query1.1}
\Theta_R(v_i,v_j,s)=\max_{v_h \in V[v_i,v_j]} \theta_R(s, [v_h, v_j]).
\end{equation}
$L$-$test(P[v_i,v_j],t)$ succeeds if and only if $\Theta_L(v_i,v_j) \le t$ 
and $R$-$test(P[v_i,v_j],s,t)$ succeeds if and only if $\Theta_R(v_i,v_j,s) \le t$.

For leaf nodes $l(v_i)$ and $l(v_j)$ in $\cal T$ which correspond to $v_i$ and $v_j$, respectively, let $u$ be the least common ancestor of $\cal T$. Then in the subtree ${\cal T}(u)$ with the root $u$ in $\cal T$, we can identify 
a set of vertex-disjoint subpaths which covers vertices of $P[v_i,v_j]$ such that the number of such subpaths is  $O(\log n)$, and every subpath corresponds to the set of leaves that a subtree ${\cal T}(u')$ spans for some node $u'$ in ${\cal T}(u)$.
Let $\mathcal{P}[v_i,v_j]$ denote the set of such subpaths.

In the rest of this section, we only show how to compute $\Theta_L(v_i,v_j)$ since $\Theta_R(v_i,v_j,s)$ is symmetric, so can be similarly computed.
The computation of (\ref{eq:query1}) reduces to 
\begin{equation}\label{eq:query2}
\Theta_L(v_i,v_j)=\max_{P[v_l,v_r] \in \mathcal{P}[v_i,v_j]} \left\{ \max_{v_h \in V[v_l,v_r]} \theta_L(v_j^+, [v_i, v_h]) \right\}.
\end{equation}
To evaluate (\ref{eq:query2}), 
we need to compute 
\begin{equation}\label{eq:query3}
\max_{v_h \in V[v_l,v_r]} \theta_L(v_j^+, [v_i, v_h]) = \max_{v_h \in V[v_l,v_r]} \left\{d(v_h,v_j^+)\tau + \frac{W[v_i,v_h]}{c(v_h,v_{j+1})}\right\}
\end{equation}
for every subpath $P[v_l,v_r] \in \mathcal{P}[v_i,v_j]$.
Since $v_i \le v_l \le v_r \le v_j$,
the right side of (\ref{eq:query3}) is rewritten as 
\begin{equation}\label{eq:query4}
\max_{v_h \in V[v_l,v_r]} \left\{d(v_h,v_r)\tau + d(v_r,v_j^+)\tau + \frac{W[v_i,v_{l-1}] +W[v_l, v_h]}{\min\{c(v_h,v_r), c(v_r, v_{j+1})\}}\right\}.
\end{equation}
Suppose that $u'$ is a node of $\cal T$ spanning $P[v_l,v_r]$.
Then, to facilitate the computation of (\ref{eq:query4}) for general case, we will prepare at node $u'$ the following machinery that allows us to compute
\begin{equation}\label{eq:query5}
cost^L_{u'}(W,C)=\max_{v_h \in V[v_l,v_r]} \left\{d(v_h,v_r)\tau + \frac{W +W[v_l, v_h]}{\min\{c(v_h,v_r), C\}}\right\}
\end{equation}
in $O(\log n)$ time once $W$ and $C$ are given. Here $W$ and $C$ are unknown parameters. 
This part will be explained in more detail in Sec.~\ref{sec:capacitated}.


\section{Data structures for the edge-capacitated case}\label{sec:capacitated}
We want to perform $(t,k)$-feasibility tests for many different values of completion time $t$.
Therefore, it will be useful to spend some time during preprocessing to
construct data structures which facilitate those tests.
Let us consider an arbitrary subpath $P[v_i,v_j]$,
where $i\leq j$.
The vertex $v_k \in V[v_i,v_j]$ that maximizes $\theta_L(x,[v_i,v_k])$ at $x\succ v_j$
(resp. $\theta_R(v_{i-1},[v_k,v_j])$ at $x\prec v_i$)
is called the {\em L-critical vertex} (resp. {\em R-critical vertex}) of $P[v_i,v_j]$ {w.r.t. $x$},
and the corresponding subpath $P[v_i,v_k]$ (resp. $P[v_k,v_j]$) is called
the {\em L-critical cluster} (resp. {\em R-critical cluster}) of $P[v_i,v_j]$ {w.r.t. $x$}.
It is easy to show the following proposition.
\begin{proposition}
The L-critical (resp. R-critical) vertex/cluster w.r.t. $x$ is the same for all points $x$ on an edge,
excluding its left (resp. right) end vertex.
\QED 
\end{proposition}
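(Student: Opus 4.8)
The plan is to fix an edge and show that, as the reference point $x$ moves along that edge with the relevant endpoint excluded, the pairwise differences of the L-costs of the candidate vertices do not change. Since the vertex attaining the maximum is determined entirely by these pairwise comparisons, it must then be the same for all such $x$, and the critical cluster is in turn determined by the critical vertex, so the claim follows.

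Concretely, let $e_m=(v_m,v_{m+1})$ be an edge with $m\ge j$ and let $x$ range over the half-open edge $(v_m,v_{m+1}]$, so that its left endpoint $v_m$ is excluded. Pick any two vertices $v_k,v_{k'}\in V[v_i,v_j]$ with $k<k'\le j$. First, because $v_k\prec v_{k'}\preceq v_j\prec x$, the difference $d(v_k,x)-d(v_{k'},x)=d(v_k,v_{k'})$ is independent of $x$. Second, the set of edges on the subpath between $v_{k'}$ and $x$ is exactly $\{e_{k'},e_{k'+1},\ldots,e_m\}$ for every $x\in(v_m,v_{m+1}]$: the last edge $e_m$ is included — capacities are not prorated, so a partially traversed edge still contributes $c_m$ — precisely because $x$ lies strictly to the right of $v_m$. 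Hence $c(v_{k'},x)=\min\{c_{k'},\ldots,c_m\}$ is constant on $(v_m,v_{m+1}]$, and likewise $c(v_k,x)$ is constant there. Plugging these facts into the formula $\theta_L(x,[v_i,v_h])=d(v_h,x)\tau+W[v_i,v_h]/c(v_h,x)$ of (\ref{eqn:leftcost3}) shows that $\theta_L(x,[v_i,v_k])-\theta_L(x,[v_i,v_{k'}])$ is constant on $(v_m,v_{m+1}]$. Therefore the total order of $\{\theta_L(x,[v_i,v_h]):v_h\in V[v_i,v_j]\}$ is the same for all such $x$ — ties, if any, broken by a fixed rule such as smallest index, which is position-independent for the same reason — so the L-critical vertex and the L-critical cluster it induces are unchanged as $x$ varies over the half-open edge. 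The R-critical case is the mirror image: $x$ now ranges over $[v_m,v_{m+1})$ with the right endpoint $v_{m+1}$ excluded, $d(x,v_k)-d(x,v_{k'})=d(v_{k'},v_k)$ for $k'<k$ by (\ref{eqn:rightcost3}), and the edge set on the subpath from $x$ to any $v_k\succeq v_i$ is stable on $[v_m,v_{m+1})$; again $c(x,\cdot)$ would jump if $x$ reached $v_{m+1}$, which is why that endpoint is dropped.

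I do not expect a genuine obstacle here; the only subtle point — and the entire reason for the ``excluding its left (resp. right) end vertex'' clause — is that the minimum-edge-capacity function $c(\cdot,\cdot)$ is piecewise constant in $x$ but discontinuous exactly at the excluded endpoint, so the statement must be phrased over the half-open edge rather than the closed one.
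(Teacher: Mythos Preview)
Your argument is correct: on the half-open edge $(v_m,v_{m+1}]$ each capacity $c(v_h,x)$ is constant and each distance $d(v_h,x)$ changes by the same additive amount as $x$ moves, so the functions $\theta_L(x,[v_i,v_h])$ are mutually parallel there and the maximizer is invariant. The paper does not actually supply a proof---it states the proposition as ``easy to show'' and closes it with a bare \QED---so there is no alternative approach to compare against; your write-up is exactly the natural justification the paper omits.
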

Therefore, we can talk about a critical vertex/cluster w.r.t. an edge.
The L-critical vertex of $P[v_i,v_j]$ w.r.t. edge $(v_j,v_{j+1})$
(resp. R-critical vertex of $P[v_i,v_j]$ w.r.t. edge $(v_{i-1},v_i)$) 
is denoted by $c_L^{[i,j]}$ (resp. $c_R^{[i,j]}$),
and L-critical cluster of $P[v_i,v_j]$ w.r.t. edge $(v_j,v_{j+1})$ (resp. R-critical cluster of $P[v_i,v_j]$ w.r.t. edge $(v_{i-1},v_i)$)
is denoted by $C_L^{[i,j]}$ (resp. $C_R^{[i,j]}$).
We thus have $C_L^{[i,j]}=P[v_i,c_L^{[i,j]}]$  and $C_R^{[i,j]}=P[c_R^{[i,j]},v_j]$.

\subsection{Critical cluster tree}\label{sec:datastructure2}
We first construct the {\em critical cluster tree} (or {\em CC-tree} for short), ${\cal T}$,
with root $\rho$, whose leaves are the vertices of $P$,
arranged from left to right.
It is a balanced tree with height $O(\log n)$.
In balancing, the vertex weights are not considered.
See Fig.~\ref{fig:CCtree},
where $\pi(v_i,\rho)$ denotes the path from $v_i$ to root $\rho$.
\begin{figure}[ht]
\centering
\includegraphics[height=3cm]{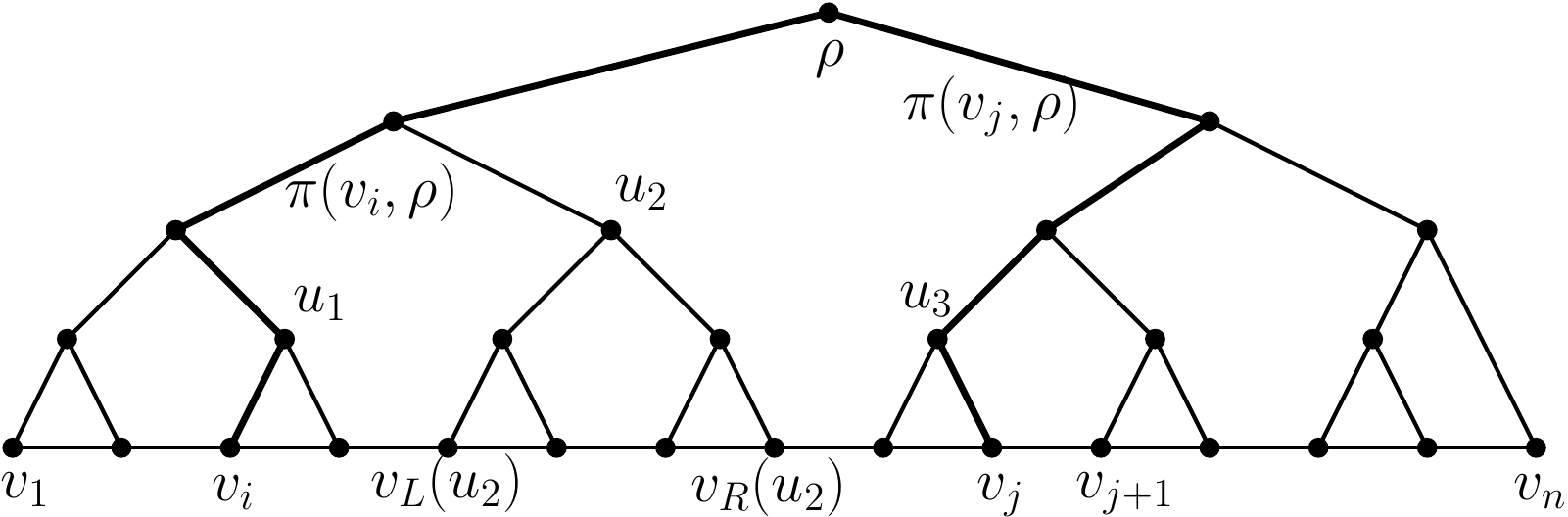}
\caption{
CC-tree ${\cal T}$.}
\label{fig:CCtree}
\end{figure}
For a node\footnote{We use the term
``node'' here to distinguish it from the vertices on the path.
A vertex, being a leaf of ${\cal T}$, is considered a node, but an interior node of ${\cal T}$
is not a vertex.}
 $u$ of ${\cal T}$, let ${\cal T}(u)$ denote the subtree rooted at $u$,
let $u_l$ (resp. $u_r$) be its left (resp. right) child node,
and let $v_L(u)$ (resp. $v_R(u)$) denote the leftmost (resp. rightmost) vertex on $P$
that belongs to ${\cal T}(u)$.
We say that ${\cal T}(u)$ {\em spans} subpath $P[v_L(u),v_R(u)]$ and also $u$ spans $P[v_L(u),v_R(u)]$.
At node $u$, we store two sorted lists of capacities $c(v_h,v_R(u))$ for $v_h \in V[v_L(u),v_R(u)] \setminus \{v_R(u)\}$
and $c(v_L(u),v_h)$ for $v_h \in V[v_L(u),v_R(u)] \setminus \{v_L(u)\}$.
The list of $c(v_h,v_R(u))$ can be computed in the decreasing order in $h$ in $O(|{\cal T}(u)|)$ time.
Symmetrically, the list of $c(v_L(u),v_h)$ can be also computed in $O(|{\cal T}(u)|)$ time.

For $L$-$test(P[v_i,v_j],t)$ and $R$-$test(P[v_i,v_j],s,t)$ mentioned in Sec.~\ref{sec:overview},
we need to determine $c_L^{[i,j]}$ and $c_R^{[i,j]}$, respectively.
To do this, for every highest node $u$ of $\cal T$ spanning a subpath of $P[v_i,v_j]$
we will prepare a machinery at each node $u$ of $\cal T$ that allows us to compute
\begin{eqnarray}
cost^L_u(W,C)=\max_{v_h \in V[v_L(u),v_R(u)]} \left\{d(v_h,v_R(u))\tau + \frac{W +W[v_L(u), v_h]}{\min\{c(v_h,v_R(u)), C\}}\right\}, \label{eq:lquery}\\
cost^R_u(W,C)=\max_{v_h \in V[v_L(u),v_R(u)]} \left\{d(v_L(u),v_h)\tau + \frac{W[v_h,v_R(u)]+W}{\min\{C, c(v_L(u),v_h)\}}\right\}. \label{eq:rquery}
\end{eqnarray}
in $O(\log n)$ for arbitrary $W$ and $C$.

Suppose that we have such a machinery at each node $u$.
Once $P[v_i,v_j]$ is given, for a node $u$ spanning a subpath of $P[v_i,v_j]$,
$W$ and $C$ are given as $W[v_i,v_{l-1}]$ (where $v_l = v_L(u)$) and $c(v_R(u),v_j)$ in (\ref{eq:lquery}), respectively.
Let us call a vertex which achieves the maximum value in (\ref{eq:lquery}) 
w.r.t. $u$, $W=W[v_i,v_{l-1}]$ and $C=c(v_R(u),v_j)$ 
{\em an L-critical candidate} 
of $P[v_i,v_j]$.
Then one of L-critical candidates 
of $P[v_i,v_j]$ must be $c_L^{[i,j]}$,
which implies that we can do an $L$-$test(P[v_i,v_j],t)$ in $O(\log^2 n)$ time
since there are $O(\log n)$ nodes spanning vertex-disjoint subpaths of $P[v_i,v_j]$.
Similarly, {\em an R-critical candidate} of $P[v_i,v_j]$ is defined, and one of R-critical candidates of $P[v_i,v_j]$ is $c_R^{[i,j]}$,
thus $R$-$test(P[v_i,v_j],s,t)$ can be done in $O(\log^2 n)$ time.

To compute $cost^L_u(W,C)$, 
one idea is to prepare a two-dimensional table at node $u$ which returns the L-critical vertex of $P[v_i,v_j]$ for queries of $W$ and $C$,
but it takes much time and space in total.
Instead of this, we actually store at node $u$ two linear tables of vertices in $V[v_L(u),v_R(u)]$:
one returns a vertex $v^1_u(W)$ for a query of $W$ which achieves
\begin{eqnarray}
\max_{v_h \in V[v_L(u),v_R(u)]} \left\{d(v_h,v_R(u))\tau + \frac{W +W[v_L(u), v_h]}{c(v_h,v_R(u))}\right\}, \label{eq:lquery2}
\end{eqnarray}
and the other one returns a vertex $v^2_u(C)$ for a query of $C$ which achieves
\begin{eqnarray}
\max_{v_h \in V[v_L(u),v_R(u)]} \left\{d(v_h,v_R(u))\tau + \frac{W[v_L(u), v_h]}{C}\right\}. \label{eq:lquery3}
\end{eqnarray}
We call the first table {\em the left weight table} of $u$ and the other {\em the left capacity table} of $u$.
Also, to compute $cost^R_u(W,C)$, we store similar two tables at node $u$, 
called {\em the right weight table} of $u$ and {\em the right capacity table} of $u$.
Note that for each leaf node $u=v_i$ (which is a vertex of $P$),
tables always return $v_i$ itself for any $W$ and $C$.
In Sec.~\ref{sec:2algorithms}, we will show how to use these tables to compute $(\ref{eq:lquery})$ and $(\ref{eq:rquery})$.

\subsection{CC-tree construction}\label{sec:CCtree}
In this section, we show how to construct the left weight table and the left capacity table at a node $u$ of $\cal T$.
Note that in the construction of CC-tree, we can construct tables of $u$ without using any information stored at children $u_l$ and $u_r$.

\medskip\noindent
(a) {\bf Weight table:} For a vertex $v_h \in V[v_L(u),v_R(u)]$, let $f^1_h(W)$ denote a function of $W$ such that
\begin{eqnarray}
f^1_h(W) = \alpha^1_hW + \beta^1_h, \label{eq:lquery2.1}
\end{eqnarray}
where $\alpha^1_h =1/c(v_h,v_R(u))$ and $\beta^1_h=d(v_h,v_R(u))\tau + W[v_L(u), v_h]/c(v_h,v_R(u))$.
Then the equation (\ref{eq:lquery2}) can be rewritten as
\begin{eqnarray}
\max_{v_h \in V[v_L(u),v_R(u)]} f^1_h(W). \label{eq:lquery2.2}
\end{eqnarray}
Note that once we compute the upper envelope of $f^1_h(W)$ for all $v_h \in V[v_L(u),v_R(u)]$,
it can return a vertex $v^1_u(W)$ for a query of $W$ which achieves (\ref{eq:lquery2.2}),
which is equivalent to the left weight table of $u$.
Here $f^1_h(W)$ is a linear function in $W$ and $\alpha^1_h$ is decreasing in $h$.
Using the concept of duality of lines and points in 2-D,
it is known that computing the upper envelope of lines is equivalent to
computing the lower convex hull
of points \cite{deberg2008,preparata2012}.
As noted in \cite{preparata2012}, it is known that if points are sorted in
$x$-coordinates, the convex hull can be computed in linear time by
using the Graham scan algorithm \cite{graham1972}.
Summarizing these facts, we can obtain the left weight table of $u$ in $O(|{\cal T}(u)|)$ time,

\medskip\noindent
(b) {\bf Capacity table:} For a vertex $v_h \in V[v_L(u),v_R(u)]$, let $f^2_h(1/C)$ denote a function of $1/C$ such that
\begin{eqnarray}
f^2_h(1/C) = \alpha^2_h \cdot (1/C) + \beta^2_h, \label{eq:lquery3.1}
\end{eqnarray}
where $\alpha^2_h=W[v_L(u), v_h]$ and $\beta^2_h=d(v_h,v_R(u))\tau$.
Then the equation (\ref{eq:lquery3}) can be rewritten as
\begin{eqnarray}
\max_{v_h \in V[v_L(u),v_R(u)]} f^2_h(1/C). \label{eq:lquery3.2}
\end{eqnarray}
Here $f^2_h(1/C)$ is a linear function in $1/C$ and $\alpha^2_h$ is increasing in $h$, we thus can compute the upper envelope of $f^2_h(1/C)$ for all $v_h \in V[v_L(u),v_R(u)]$ in $O(|{\cal T}(u)|)$ time,
which is equivalent to the left capacity table of $u$ (similarly to (a)).

\begin{lemma}\label{lem:searchTree}
Given a dynamic path network with $n$ vertices and general edge capacities,
we can construct its CC-tree, ${\cal T}$, in $O(n\log n)$ time.
\end{lemma}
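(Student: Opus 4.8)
The plan is to bound the total running time by summing, over every node $u$ of the CC-tree $\cal T$, the cost of assembling the data stored at $u$ --- the two sorted capacity lists and the four tables (left/right weight table, left/right capacity table) described in Sec.~\ref{sec:datastructure2} --- and to show that this cost is $O(|{\cal T}(u)|)$ at each node. Since $\cal T$ is balanced with height $O(\log n)$ and the subtrees rooted at nodes of a common depth have pairwise disjoint leaf sets, we have $\sum_{u\text{ at depth }d}|{\cal T}(u)|\le n$ for each depth $d$; summing over the $O(\log n)$ depths then gives the claimed $O(n\log n)$ bound. It therefore suffices to establish the per-node linear bound, and, as emphasized in Sec.~\ref{sec:CCtree}, to do so without consulting the children of $u$, so that the nodes can be treated independently.

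First I would perform $O(n)$-time global preprocessing: lay out the balanced tree shape on the leaves $v_1,\ldots,v_n$, and build the prefix-weight array $\{W[v_1,v_j]\}$ of~(\ref{eqn:weightarray}) together with the analogous prefix-distance array, so that every quantity $W[v_L(u),v_h]$ and $d(v_h,v_R(u))$ needed below is available in $O(1)$ time. Then, at a node $u$, I would first generate the two capacity lists: scanning the vertices of $V[v_L(u),v_R(u)]$ from $v_R(u)$ leftward while maintaining the running minimum of edge capacities yields $c(v_h,v_R(u))$ for all $v_h$ in $O(|{\cal T}(u)|)$ time and already in monotone (non-decreasing in $h$) order, and symmetrically for $c(v_L(u),v_h)$.

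The core of the argument is that each of the four tables at $u$ is the upper envelope of a family of $|{\cal T}(u)|$ lines and that this envelope is computable in linear time. For the left weight table the relevant functions are the lines $f^1_h(W)=\alpha^1_hW+\beta^1_h$ of~(\ref{eq:lquery2.1}), whose coefficients are computed in $O(1)$ each from the preprocessed arrays and the capacity list; the table is the upper envelope of $\{f^1_h\}$ over $v_h\in V[v_L(u),v_R(u)]$. Dualizing lines to points, this envelope corresponds to the lower convex hull of the dual point set \cite{deberg2008,preparata2012}, and because the slope $\alpha^1_h=1/c(v_h,v_R(u))$ is monotone in $h$ (the capacity $c(v_h,v_R(u))$ being non-decreasing in $h$), the dual points arrive already sorted by abscissa; hence a single Graham scan \cite{graham1972} produces the hull --- equivalently, the envelope with its breakpoints in sorted order --- in $O(|{\cal T}(u)|)$ time. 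Storing that breakpoint array, each interval tagged with the vertex attaining the maximum there, is exactly the left weight table, and a query $W$ is answered by binary search in $O(\log n)$ time. The left capacity table is handled identically with the lines $f^2_h(1/C)=\alpha^2_h(1/C)+\beta^2_h$ of~(\ref{eq:lquery3.1}), whose slopes $\alpha^2_h=W[v_L(u),v_h]$ are monotone (non-decreasing) in $h$ since the weights are positive, so again the dual points are pre-sorted and the hull takes linear time; the two right-hand tables are symmetric. Summing $O(|{\cal T}(u)|)$ over all $u$ as above yields the lemma.

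The step I expect to require the most care is precisely this linear-time envelope construction: one must verify that the lines need not be sorted by slope in a separate preliminary step, since such a sort would cost $\Theta(|{\cal T}(u)|\log|{\cal T}(u)|)$ per node and inflate the total to $O(n\log^2 n)$. The escape is the monotonicity of $c(v_h,v_R(u))$ (and of $W[v_L(u),v_h]$) in $h$, which is exactly why the capacity lists are produced in monotone order to begin with; everything else --- the tree layout, the prefix arrays, and the level-by-level summation --- is routine.
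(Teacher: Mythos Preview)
Your proposal is correct and follows essentially the same approach as the paper: you build the sorted capacity lists and the four weight/capacity tables at each node $u$ in $O(|{\cal T}(u)|)$ time via the upper-envelope/duality/Graham-scan argument (exploiting the monotonicity of the slopes in $h$ to avoid a separate sort), and then sum over the $O(\log n)$ depths using disjointness of the subtrees at each level to obtain $O(n\log n)$. The paper's own proof is a terser version of exactly this argument, relying on the linear-time envelope construction already justified in Sec.~\ref{sec:CCtree}.
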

\begin{proof}
We construct 
two sorted lists of capacities,
two weight tables and two capacity tables 
at every node $u$ one by one (which does not need to be performed bottom up).
As mentioned above, these all can be constructed in $O(|{\cal T}(u)|)$ time.
For a non-negative integer $d$, let $U(d)$ denote a set of nodes of ${\cal T}$ such that each node $u \in U(d)$ is located at depth $d$ from root $\rho$
(see Fig.~\ref{fig:CCtree}).
Therefore, letting $h$ be the height of ${\cal T}$,
the total time required to construct ${\cal T}$ can be represented as $\sum_{d=0}^h \sum_{u \in U(d)} O(|{\cal T}(u)|)$.
We here have $h = O(\log n)$ and $\sum_{u \in U(d)} O(|{\cal T}(u)|) = O(n)$ since for a fixed $d$, ${\cal T}(u)$ for all $u \in U(d)$ are vertex-disjoint,
thus the total time is $O(n\log n)$.
%
\end{proof}


\section{Two main tasks}\label{sec:2tasks}
There are two useful tasks that we can call upon repeatedly.
{\tt Task 1} is to find a maximal subpath $P[v_a,v_d]$, given the starting vertex $v_a$,
such that we can place a 1-sink on it to enable all the evacuees to evacuate
to it within time $t$.
{\tt Task 2} is to find a 1-sink on a given subpath $P[v_i,v_j]$.
We want to construct an algorithm for each of these tasks.

In the two algorithms, 
given a subpath $P[v_i, v_j]$ and a node $u$ of $\cal T$ spanning a subpath $P[v_l,v_r]$ (i.e., $v_l=v_L(u)$ and $v_r=v_R(u)$) of $P[v_i,v_j]$,
we need to compute (\ref{eq:lquery}) with $W=W[v_i,v_{l-1}]$ and $C=c(v_r,v_j)$,
and (\ref{eq:rquery}) with $W=W[v_{r+1},v_j]$ and $C=c(v_i,v_l)$.
We first show the following lemma.
\begin{lemma}\label{lem:candidate}
Assume that the CC-tree, ${\cal T}$, is available.
Then, given a subpath $P[v_i, v_j]$ and a node $u$ of $\cal T$ spanning a subpath $P[v_l,v_r]$ of $P[v_i,v_j]$,
the L-critical candidate and the R-critical candidate of $P[v_i, v_j]$ belonging to $P[v_l, v_r]$ can be computed in $O(\log n)$ time.
\end{lemma}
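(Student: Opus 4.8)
The plan is to show that computing the L-critical candidate of $P[v_i,v_j]$ belonging to $P[v_l,v_r]$ amounts to evaluating $cost^L_u(W,C)$ of (\ref{eq:lquery}) with the specific parameters $W=W[v_i,v_{l-1}]$ and $C=c(v_r,v_j)$, and that this can be done in $O(\log n)$ time using the left weight table and the left capacity table stored at $u$ (the R-critical candidate being symmetric, using the right tables). The key observation is that, for a fixed query $C$, the inner maximand $d(v_h,v_R(u))\tau + (W+W[v_L(u),v_h])/\min\{c(v_h,v_R(u)),C\}$ splits the vertices of $V[v_L(u),v_R(u)]$ into two groups: those $v_h$ with $c(v_h,v_R(u)) \le C$, for which the denominator is $c(v_h,v_R(u))$ and the maximand becomes exactly $f^1_h(W)$ of (\ref{eq:lquery2.1}); and those with $c(v_h,v_R(u)) > C$, for which the denominator is $C$ and the maximand becomes $f^2_h(1/C) + W/C$, where $f^2_h$ is as in (\ref{eq:lquery3.1}). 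Since $W/C$ is a constant shift not depending on $h$, maximizing over the second group is the same as maximizing $f^2_h(1/C)$, which is precisely what the left capacity table returns.

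First I would make precise the split point: since $c(v_h,v_R(u))$ is nonincreasing in $h$ (it is the min capacity over a suffix of edges that grows as $h$ decreases — this is exactly the monotonicity already noted when the sorted capacity list at $u$ was built), the set $\{v_h : c(v_h,v_R(u)) \le C\}$ is a prefix $V[v_L(u),v_p]$ and the complementary set is a suffix $V[v_{p+1},v_R(u)]$, where $p$ is found by binary search in the sorted capacity list stored at $u$ in $O(\log n)$ time. Then the overall maximum over all $v_h$ is the larger of (i) $\max_{v_h \in V[v_L(u),v_p]} f^1_h(W)$ and (ii) $\max_{v_h \in V[v_{p+1},v_R(u)]} f^2_h(1/C) + W/C$.

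The remaining issue is that the weight table and capacity table of $u$ were built as upper envelopes over the \emph{entire} vertex set $V[v_L(u),v_R(u)]$, not over the sub-ranges $V[v_L(u),v_p]$ and $V[v_{p+1},v_R(u)]$. I would handle this exactly as one handles half-line queries on an upper envelope: the function family is sorted by slope ($\alpha^1_h=1/c(v_h,v_R(u))$ decreasing in $h$, $\alpha^2_h=W[v_L(u),v_h]$ increasing in $h$), so the breakpoints of each upper envelope are encountered in index order; a binary search on the query value ($W$ for the weight table, $1/C$ for the capacity table) locates the optimal $v_h$ on the full envelope, and a further binary search (or a precomputed auxiliary structure recording, for each envelope breakpoint, the index range it governs) returns the optimal vertex restricted to a given contiguous index range. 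Each such query costs $O(\log n)$. Taking the max of (i) and (ii), adding back the base terms $d(v_r,v_j)\tau$ and accounting for $W=W[v_i,v_{l-1}]$ as in the derivation of (\ref{eq:query4}), yields $cost^L_u(W,C)$ and the achieving vertex — the L-critical candidate — in $O(\log n)$ time. The symmetric argument with the right weight and right capacity tables gives the R-critical candidate.

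The main obstacle I anticipate is the restricted-range envelope query: one must argue carefully that the upper envelope of a slope-sorted family of lines, when intersected with a contiguous range of indices, still admits $O(\log n)$ evaluation — i.e., that the maximizer over a prefix (or suffix) of the family at a fixed query point is either the global maximizer (if it lies in the range) or an endpoint of the range, and that this can be decided quickly. This follows from the unimodality of $h \mapsto f^1_h(W)$ (resp. $f^2_h(1/C)$) along the envelope for fixed query, but making the data-structure details airtight — so that all of the binary searches compose to $O(\log n)$ — is the part that requires the most care.
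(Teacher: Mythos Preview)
Your decomposition matches the paper's: split $V[v_l,v_r]$ at the threshold index $h^*$ where $c(v_h,v_r)$ crosses $C$, so that on $V_1=V[v_l,v_{h^*}]$ the maximand is $f^1_h(W)$ and on $V_2=V[v_{h^*+1},v_r]$ it is $f^2_h(1/C)+W/C$. (Minor slip: $c(v_h,v_r)$ is non\emph{decreasing} in $h$, not nonincreasing; your prefix/suffix conclusion is nonetheless the right one.)

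The genuine gap is your restricted-range envelope query. You claim that the maximizer of $f^1_h(W)$ over the prefix $V_1$ is either the global maximizer (if it lies in $V_1$) or an endpoint of $V_1$, invoking unimodality of $h\mapsto f^1_h(W)$ ``along the envelope.'' This is not true for an arbitrary slope-sorted family of lines: three lines with slopes $3,2,1$ and intercepts $2,1,3$ take values $5,3,4$ at $W=1$, which is not unimodal in the index. Nothing in the definitions of $\alpha^1_h,\beta^1_h$ rules this out, so with only the single global envelope stored at $u$ you cannot answer prefix-restricted maxima in $O(\log n)$ by that argument.

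The paper sidesteps the issue entirely. It queries the full left weight table for the global maximizer $v^*_1$ of $f^1_h(W)$ over all of $V_1\cup V_2$, and the full left capacity table for the global maximizer $v^*_2$ of $f^2_h(1/C)$ over all of $V_1\cup V_2$. A short case argument then shows that $v^*_1\in V_2$ and $v^*_2\in V_1$ cannot hold simultaneously: if $v^*_1\in V_2$ then $c(v^*_1,v_r)>C$ forces $cost^L_u(W,C,P_1)<cost^L_u(W,C,P_2)$, whereas if $v^*_2\in V_1$ then $c(v^*_2,v_r)\le C$ forces the reverse inequality. Consequently at least one of $v^*_1,v^*_2$ lands in its ``own'' part and realizes that part's maximum; a simple comparison then yields the L-critical candidate using only two full-range $O(\log n)$ envelope queries plus the $O(\log n)$ binary search for $h^*$. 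No restricted-range query is ever needed.
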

\begin{proof}
We only prove the case of the L-critical candidate,
letting $W=W[v_i,v_{l-1}]$ and $C=c(v_r,v_j)$
(the proof for the R-critical candidate is symmetric).

Suppose that there exists an integer $h^*$ satisfying $l \le h^* \le r-2$ such that $c(v_{h^*},v_r) \le C$ and $c(v_{h^*+1},v_r) > C$
(if does not exist, $c(v_h,v_r) \le C$ for every $h$ satisfying $l \le h \le r-1$
or $c(v_h,v_r) > C$ for every $h$ satisfying $l \le h \le r-1$).
Note that such $h^*$ uniquely exists since $c(v_h,v_r)$ is increasing in $h$.
We first separate $P[v_l, v_r]$ to two subpaths $P_1=P[v_l, v_{h^*}]$ and $P_2=P[v_{h^*+1}, v_r]$,
which can be done in $O(\log n)$ time by binary search over the sorted list of capacities stored at $u$.
Letting $V_1=V[v_l, v_{h^*}]$ and $V_2=V[v_{h^*+1}, v_r]$, we then consider
\begin{eqnarray}
cost^L_u(W,C,P_1)	&=&\max_{v_h \in V_1} \left\{d(v_h,v_r)\tau + \frac{W +W[v_l, v_h]}{\min\{c(v_h,v_r), C\}}\right\} \nonumber \\
				&=&\max_{v_h \in V_1} \left\{d(v_h,v_r)\tau + \frac{W +W[v_l, v_h]}{c(v_h,v_r)}\right\}, \label{eq:lqueryp1}
\end{eqnarray}
and 
\begin{eqnarray}
cost^L_u(W,C,P_2)	&=&\max_{v_h \in V_2} \left\{d(v_h,v_r)\tau + \frac{W +W[v_l, v_h]}{\min\{c(v_h,v_r), C\}}\right\} \nonumber \\
				&=&\max_{v_h \in V_2} \left\{d(v_h,v_r)\tau + \frac{W +W[v_l, v_h]}{C}\right\} \nonumber \\
				&=&\max_{v_h \in V_2} \left\{d(v_h,v_r)\tau + \frac{W[v_l, v_h]}{C}\right\} + \frac{W}{C}. \label{eq:lqueryp2}
\end{eqnarray}
Note that $cost^L_u(W,C)=\max \{ cost^L_u(W,C,P_1),cost^L_u(W,C,P_2) \}$.
By binary search over the left weight table of $u$, we can identify a vertex $v^*_1$ maximizing $\{d(v_h,v_r)\tau + (W +W[v_l, v_h])/c(v_h,v_r)\}$ for $v_h \in V_1 \cup V_2$ in $O(\log n)$ time.
Similarly, using the left capacity table of $u$,
we can identify a vertex $v^*_2$ maximizing $\{d(v_h,v_r)\tau + W[v_l, v_h]/C\}$ for $v_h \in V_1 \cup V_2$ in $O(\log n)$ time.
Note that if $v^*_1 \in V_2$, 
\begin{eqnarray}
cost^L_u(W,C,P_1)	&\le& d(v^*_1,v_r)\tau + \frac{W +W[v_l, v_h]}{c(v^*_1,v_r)} \nonumber \\
				&<& d(v^*_1,v_r)\tau + \frac{W +W[v_l, v_h]}{C} \le cost^L_u(W,C,P_2). \nonumber
\end{eqnarray}
and if $v^*_2 \in V_1$, 
\begin{eqnarray}
cost^L_u(W,C,P_2)	&\le& d(v^*_2,v_r)\tau + \frac{W +W[v_l, v_h]}{C} \nonumber \\
				&\le& d(v^*_1,v_r)\tau + \frac{W +W[v_l, v_h]}{c(v^*_2,v_r)} \le cost^L_u(W,C,P_1), \nonumber
\end{eqnarray}
which implies that $v^*_1 \in V_2$ and $v^*_2 \in V_1$ never occur simultaneously.
Therefore, if $v^*_1 \in V_2$, $v^*_2 \in V_2$ and $cost^L_u(W,C,P_1)< cost^L_u(W,C,P_2)$, thus $v^*_2$ is the L-critical candidate of $P[v_i, v_j]$.
If $v^*_2 \in V_1$, $cost^L_u(W,C,P_1) \ge cost^L_u(W,C,P_2)$ and $v^*_1 \in V_1$ holds, thus $v^*_1$ is the L-critical candidate of $P[v_i, v_j]$.
Otherwise $v^*_1 \in V_1$ and $v^*_2 \in V_2$, then $v^*_1$ achieves $cost^L_u(W,C,P_1)$ and $v^*_2$ also achieves $cost^L_u(W,C,P_2)$, respectively.
We then compare these two costs and choose one whose cost is larger.
\end{proof}

\subsection{Basic algorithms}\label{sec:2algorithms}
Let us first design an algorithm for {\tt Task 1},
referring to Fig.~\ref{fig:upDown},
which shows a part of the CC-tree $\cal T$.
\begin{figure}[ht]
\centering
\includegraphics[height=3cm]{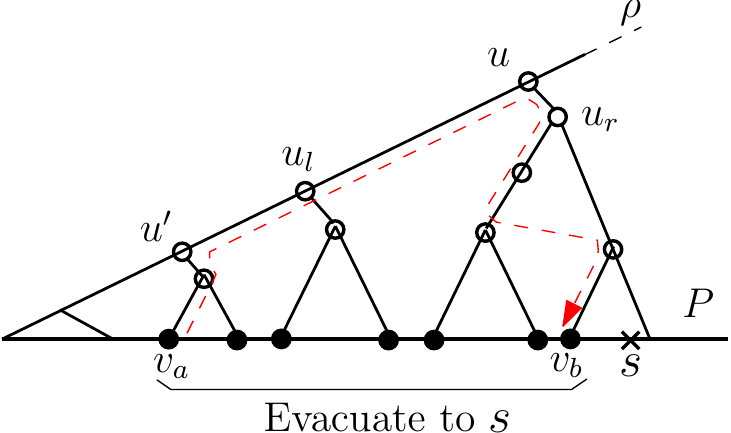}
\hspace{5mm}
\includegraphics[height=3cm]{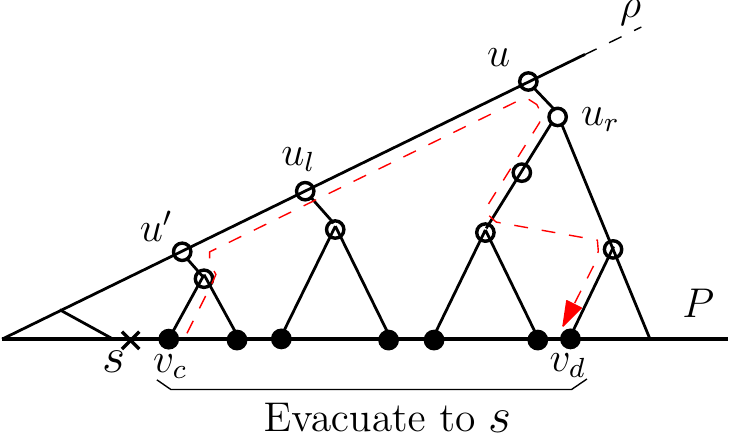}
\caption{
{\em Left}: Looking for rightmost $v_b$ such that the evacuees from $P[v_a,v_b]$
can evacuate to sink $s$ within time $t$;
{\em Right}: Looking for rightmost $v_d$ whose evacuees can evacuate to sink $s$ within time $t$.
}
\label{fig:upDown}
\end{figure}
Here is an informal description of the algorithm.

\begin{algorithm} {\tt Isolate-subpath$(t,v_a)$}
~
\begin{enumerate}
\item
Start at leaf $u=v_a$ of ${\cal T}$,\footnote{See the left figure in Fig.~\ref{fig:upDown}.}
 and move up towards its root $\rho$.
At each node $u$ visited,
do $L$-$test(P[v_a,v_R(u)],t)$, i.e., 
compute the cost of the L-critical vertex of subpath $P[v_a,v_R(u)]$ w.r.t. $v_R(u)^+$, say $\Theta_L(v_a,v_R(u))$.
If $L$-$test(P[v_a,v_R(u)],t)$ returns "yes",
i.e., $\Theta_L(v_a,v_R(u)) \le t$, 
then move to its parent node $p$ and set $u=p$.
If $L$-$test(P[v_a,v_R(u)],t)$ returns "no",
then move to the right child node $u_r$ and
set $u=u_r$.
Start moving down towards a leaf.
\item
At each node visited during moving down,
do $L$-$test(P[v_a,v_R(u)],t)$.
If "yes",
then move to its parent's right child node $p_r$ and set $u=p_r$.
If "no",
then move to the left child node $u_l$ and
set $u=u_l$.
If $u$ comes to a leaf, say $v_b$, locate the 1-sink $s \in (v_b,v_{b+1}]$ to the left as much as possible.
%
%
\item
Start from the vertex $v_c$ that lies immediately to the right of $s$.\footnote{See the right figure in Fig.~\ref{fig:upDown}.}
Performing an up-down search similar to above 1 and 2 so that $R$-$test(P[v_c,v_R(u)],s,t)$ is done at each visited node $u$,
determine the rightmost vertex $v_d$ whose evacuees can reach sink $s$ within time $t$.
\end{enumerate}
\end{algorithm}

\begin{lemma}\label{lem:1subpathA}
Assume that the CC-tree, ${\cal T}$, is available.
Then {\tt Isolate-subpath$(t,v_a)$} runs in $O(\log^3 n)$ time.
\end{lemma}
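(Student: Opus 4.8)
The plan is to bound the running time of \texttt{Isolate-subpath$(t,v_a)$} as (number of CC-tree nodes visited)~$\times$~(cost of one feasibility test at a node), plus lower-order terms. For the per-node cost I would simply invoke what has already been established: with ${\cal T}$ available, one call to $L$-$test(P[v_a,v_R(u)],t)$ or $R$-$test(P[v_c,v_R(u)],s,t)$ runs in $O(\log^2 n)$ time, because the queried subpath decomposes into $O(\log n)$ canonical subpaths, each spanned by a node of ${\cal T}$, Lemma~\ref{lem:candidate} returns the L- (resp.\ R-)critical candidate of each such subpath in $O(\log n)$ time, and taking the maximum of the $O(\log n)$ candidate costs and comparing it with $t$ settles the test.

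Next I would show that every traversal of ${\cal T}$ carried out by the algorithm touches only $O(\log n)$ nodes. The ascent of Step~1 starts at the leaf $v_a$ and repeatedly moves to a parent, so it visits $O(\log n)$ nodes since ${\cal T}$ has height $O(\log n)$. For the descent of Step~2 I would track the depth of the current node: every move is either a move to a child (depth $+1$) or a move to the parent's right child — and in that second case the current node is a left child, so this is a move to its right sibling, at the same depth, which is always immediately followed by a move to a child (or by termination at the leaf $v_b$). Hence the depth strictly increases between any two consecutive sibling moves, so Step~2 visits $O(\log n)$ nodes as well; Step~3 performs an up-down search of exactly the same shape, with $R$-tests in place of $L$-tests, and is handled identically. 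Placing the sink $s\in(v_b,v_{b+1}]$ as far left as possible only reuses the critical-vertex data already produced by the $L$-tests and costs $O(\log^2 n)$, so it does not dominate. Combining, the algorithm performs $O(\log n)$ feasibility tests, each of cost $O(\log^2 n)$, plus $O(\log^2 n)$ for sink placement, for a total of $O(\log^3 n)$.

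The step I expect to need the most care is the depth bookkeeping for Step~2: one must verify that the ``move to the parent's right child'' rule is genuinely a sibling move and never a self-loop, and that, together with the ``move to the left child'' rule, no depth is ever revisited, so that the descent really has length $O(\log n)$ rather than, say, $O(\log^2 n)$. Once that is pinned down, the rest is the routine ``nodes visited $\times$ cost per node'' estimate.
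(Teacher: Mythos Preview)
Your proposal is correct and follows essentially the same approach as the paper: bound the number of CC-tree nodes visited by $O(\log n)$, charge $O(\log^2 n)$ per visit via Lemma~\ref{lem:candidate} applied to the $O(\log n)$ canonical pieces of the queried subpath, and multiply. The paper's own proof is much terser---it simply asserts that {\tt Isolate-subpath} visits $O(\log n)$ nodes---so your depth-tracking argument for Step~2 (sibling moves only from left children, each followed by a child move) is additional detail rather than a different route.
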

\begin{proof}
At each node $u$, {\tt Isolate-subpath$(t,v_a)$} carries out $L$-$test(P[v_c,v_R(u)],t)$ or $R$-$test(P[v_c,v_R(u)],s,t)$.
Each of them needs to find the critical vertex by comparing $O(\log n)$ critical candidates.
A critical candidate can be computed in $O(\log n)$ time by Lemma~\ref{lem:candidate}.
Since {\tt Isolate-subpath$(t,v_a)$} visits $O(\log n)$ nodes,
the total time is $O(\log^3 n)$.
\end{proof}

\medskip\noindent
\begin{algorithm}{\tt Find-1sink$(v_i,v_j)$}
~
\begin{enumerate}
\item
Let $u$ be the node where the two paths $\pi(v_i,\rho)$ and $\pi(v_j,\rho)$ meet.
\item
If the L-critical vertex of $P[v_i,v_R(u_l)]$ and the R-critical vertex of $P[v_L(u_r),v_j]$
have the same cost\footnote{These costs can be computed in $O(\log^3 n)$ time as we saw above.}
at some point $x$ on the edge $(v_R(u_l),v_L(u_r))$,
then return $x$ as the 1-sink.
\item
If the L-critical vertex has a higher (resp. lower) cost than the R-critical vertex 
at every point on edge $(v_R(u_l),v_L(u_r))$,
then let $u=u_l$ (resp. $u=u_r$) and repeat Step 2,
using the new $u_l$ and $u_r$.
\end{enumerate}
\end{algorithm}

Using the arguments similar to those in the proof of Lemma~\ref{lem:1subpathA},
we can prove the following lemma.

\begin{lemma}\label{lem:1subpathB}
Assume that the CC-tree, ${\cal T}$, is available.
Then {\tt Find-1sink$(v_i,v_j)$} finds the 1-sink on a given subpath $P[v_i,v_j]$ in $O(\log^3 n)$ time.
\end{lemma}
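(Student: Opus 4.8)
The plan is to mirror the analysis of {\tt Isolate-subpath$(t,v_a)$} from Lemma~\ref{lem:1subpathA}, applied to {\tt Find-1sink$(v_i,v_j)$}. First I would account for Step~1: the node $u$ where $\pi(v_i,\rho)$ and $\pi(v_j,\rho)$ meet is exactly the least common ancestor of the leaves $l(v_i)$ and $l(v_j)$ in $\cal T$, which can be found in $O(\log n)$ time by walking both root-to-leaf paths (or in $O(1)$ with standard LCA preprocessing, but the cheaper bound suffices). Then the main work is a single walk down $\cal T$: at each iteration of Steps~2--3 we move from $u$ to one of its children, so the loop executes $O(\log n)$ times since $\cal T$ has height $O(\log n)$.

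Next I would bound the cost of one iteration. Each iteration must compare, as a function of the sink position $x$ on the edge $(v_R(u_l),v_L(u_r))$, the cost contributed by the L-critical vertex of $P[v_i,v_R(u_l)]$ against that of the R-critical vertex of $P[v_L(u_r),v_j]$. Computing the L-critical vertex of $P[v_i,v_R(u_l)]$ is precisely an $L$-$test$-style query on the subpath $P[v_i,v_R(u_l)]$: decompose it into $O(\log n)$ maximal nodes $u'$ of $\cal T$, invoke Lemma~\ref{lem:candidate} at each such $u'$ to get the L-critical candidate in $O(\log n)$ time, and take the maximum, for a total of $O(\log^2 n)$. Symmetrically the R-critical vertex of $P[v_L(u_r),v_j]$ costs $O(\log^2 n)$. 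Since each of $\theta_L$ and $\theta_R$ is linear in the distance to the sink (by~(\ref{eqn:leftcost3})--(\ref{eqn:rightcost3})), once the two critical vertices are known we compare two linear functions of $x$ over the edge in $O(1)$ time: either they cross at an interior point $x$ (Step~2, output $x$), or one dominates throughout and we descend accordingly (Step~3). Hence one iteration costs $O(\log^2 n)$, and the whole algorithm $O(\log n)\cdot O(\log^2 n) = O(\log^3 n)$.

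Finally I would argue correctness of the descent: the optimal 1-sink $s^*$ on $P[v_i,v_j]$ lies at the point where the maximum L-cost coming from the left of $s^*$ balances the maximum R-cost coming from its right (by Lemma~\ref{lem:Thetas-capacitated}, $\Theta(s,[v_i,v_j])$ is the max of a left part nonincreasing in the position of $s$ and a right part nondecreasing). When we examine the edge $(v_R(u_l),v_L(u_r))$ at node $u$: if the L-critical cost exceeds the R-critical cost everywhere on that edge, then moving the sink rightward can only help, so $s^*$ lies in the right subtree and we set $u=u_r$; the symmetric case sets $u=u_l$; and if they cross on the edge, that crossing point is $s^*$. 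The invariant maintained is that $s^*$ always lies in the subpath spanned by the current $u$, so the loop terminates at an edge or leaf containing $s^*$.

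The main obstacle is the per-iteration cost bookkeeping: one has to confirm that the L-critical vertex of the \emph{prefix} $P[v_i,v_R(u_l)]$ (and R-critical vertex of the \emph{suffix} $P[v_L(u_r),v_j]$), not merely of the subpath spanned by a single node, can still be obtained in $O(\log^2 n)$ time via the $O(\log n)$-node decomposition plus Lemma~\ref{lem:candidate}; this is exactly the mechanism already used inside {\tt Isolate-subpath}, so it goes through, but it is the step where the footnote's claimed $O(\log^3 n)$ per step is actually an overestimate and the tighter $O(\log^2 n)$ per step is what one should use to reach the stated $O(\log^3 n)$ total.
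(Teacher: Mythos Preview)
Your running-time analysis is essentially the paper's own: the paper does not give an explicit proof but defers to ``arguments similar to those in the proof of Lemma~\ref{lem:1subpathA},'' and your decomposition into $O(\log n)$ descent steps, each costing $O(\log^2 n)$ via the $O(\log n)$-node cover plus Lemma~\ref{lem:candidate}, is exactly that analogy spelled out. Your remark that the footnote's $O(\log^3 n)$ per step is loose and that $O(\log^2 n)$ per step is what actually yields the stated total is correct and useful.

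However, your correctness paragraph has the monotonicity directions reversed. From~(\ref{eqn:leftcost3}), $\theta_L(s,[v_i,v_h])$ is \emph{increasing} as the sink $s$ moves right (the distance $d(v_h,s)$ grows), and symmetrically $\theta_R$ is decreasing as $s$ moves right. Hence if the L-critical cost exceeds the R-critical cost everywhere on the edge $(v_R(u_l),v_L(u_r))$, the balance point lies to the \emph{left}, and one must set $u=u_l$, exactly as Step~3 of {\tt Find-1sink} prescribes; you conclude the opposite and would descend to $u_r$. With the directions fixed, your invariant argument (that $s^*$ stays inside the span of the current $u$) goes through as written.
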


\subsection{$(t,k)$-feasibility test}\label{sec:feasibility}
Our approach is to find the maximal subpath from the left end of $P$
for which a 1-sink can achieve completion time $t$.

\begin{lemma}\label{lem:feasibility}
Given a dynamic path network with $n$ vertices,
assume that its CC-tree, ${\cal T}$, is available.
Then we can test its $(t,k)$-feasibility in $O(k\log^3 n)$ time.
\end{lemma}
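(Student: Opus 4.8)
The plan is to implement exactly the greedy left-to-right sweep described at the start of Section~\ref{sec:overview}, using {\tt Isolate-subpath} as the workhorse, and to bound its running time by $O(k\log^3 n)$.

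First I would recall the greedy procedure: starting from $v_1$, invoke {\tt Isolate-subpath$(t,v_1)$}. By Lemma~\ref{lem:1subpathA} this runs in $O(\log^3 n)$ time and returns the rightmost vertex $v_d$ whose evacuees can all reach some sink $s_1 \in (v_{l_1},v_{l_1+1}]$ within time $t$; here $v_{l_1}$ is the leaf $v_b$ reached in Step~2 of the algorithm, and $v_d$ is the vertex denoted $r_1$ in Section~\ref{sec:feasibility}. We then repeat from the next vertex $v_{r_1+1}$, calling {\tt Isolate-subpath$(t,v_{r_1+1})$}, and so on. After at most $k$ iterations we have placed $k$ sinks; the instance is declared $(t,k)$-feasible iff the last $R$-$test$ reaches $v_n$ (equivalently $r_k = n$), and infeasible otherwise. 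Since each iteration costs $O(\log^3 n)$ and there are at most $k$ of them, the total time is $O(k\log^3 n)$, which is the claimed bound.

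The correctness argument is the standard exchange/greedy one: I would argue that if a $(t,k)$-feasible assignment exists, then the greedy choice of making the first sink cover a maximal prefix $P[v_1,v_{r_1}]$ is never worse than any other first sink placement. Concretely, {\tt Isolate-subpath} first extends the left boundary of coverage to the largest $l_1$ for which a sink at $v_{l_1}^+$ can absorb all of $P[v_1,v_{l_1}]$ within time $t$ (the up-down search with $L$-$test$), then pushes the sink as far left as possible and extends coverage to the right as far as possible (the $R$-$test$ up-down search), yielding the rightmost reachable $v_{r_1}$. A straightforward inductive replacement shows that in any feasible solution the sinks can be relabeled left to right and the $i$-th sink moved/extended so that it covers at least $P[v_{r_{i-1}+1}, v_{r_i}]$; hence if the true optimum uses $k$ sinks, the greedy run also finishes within $k$ sinks, and conversely if greedy needs more than $k$ sinks then no $k$-sink solution exists. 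I would state this exchange lemma explicitly and give the one-line induction.

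I expect the main obstacle to be making the correctness of the greedy step fully rigorous — in particular justifying that ``push the sink as far left as possible, then extend coverage as far right as possible'' is the optimal local move, and that the intervals of coverage produced by successive calls genuinely tile $V[v_1,v_n]$ without gaps (i.e.\ $r_i + 1 \le l_{i+1}+1$ always, so the next subpath starts exactly where the previous one ended). This requires the monotonicity observation that an $L$-$test$ or $R$-$test$ on a longer subpath is harder to pass, which follows from Lemma~\ref{lem:Thetas-capacitated} since $\Theta_L$ and $\Theta_R$ are monotone nondecreasing as the subpath grows. The timing half is routine given Lemma~\ref{lem:1subpathA}. So the proof would be: (i) recall the $O(\log^3 n)$ cost of one {\tt Isolate-subpath} call; (ii) note at most $k$ calls; (iii) state and prove the exchange lemma using monotonicity of $\Theta_L,\Theta_R$; (iv) conclude the $O(k\log^3 n)$ bound and correctness of the feasibility verdict.
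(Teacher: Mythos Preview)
Your proposal is correct and follows essentially the same approach as the paper: invoke {\tt Isolate-subpath} starting from $v_1$, repeat from the first uncovered vertex, and declare feasibility iff $v_n$ is covered after at most $k$ calls, with the $O(k\log^3 n)$ bound coming directly from Lemma~\ref{lem:1subpathA}. The paper's own proof is in fact terser than yours --- it states only the running-time analysis and the feasibility criterion, leaving the greedy/exchange correctness argument implicit --- so your plan to spell out the monotonicity of $\Theta_L,\Theta_R$ and the exchange lemma is more detailed than what the paper provides, but not a different route.
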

\begin{proof}
Starting at the leftmost vertex $v_1$ of $P$,
invoke {\tt Isolate-subpath$(t)$},
which isolates the first subpath in $O(\log^3 n)$ time,
and remove it from $P$.
We repeat this at most $k-1$ more times on the remaining subpath, spending $O(k\log^3 n)$ time.
The problem instance is $(t,k)$-feasible if and only if the rightmost vertex $v_n$ belongs
to the last isolated subpath.
\end{proof}

\subsection{Uniform edge capacity case}\label{sec:uniform}
The problem is much simplified if the edges have the same capacity.
In particular, we can compute the critical vertex of a subpath resulting from
concatenating two subpaths in constant time.
At each node $u$ of ${\cal T}$ bottom up,
we compute and record the L- and R-critical vertices of
$P[v_L(u),v_R(u)]$ w.r.t. $v_R(u)^+$ and their costs,
based on the following lemma.
\begin{lemma}\label{lem:merge1}
{\rm \cite{higashikawa2015a}} For a node $u$ of CC-tree $\cal T$,
let $v_L(u_l)=v_h$, $v_R(u_l)=v_j$, $v_L(u_r)=v_{j+1}$, and $v_R(u_r)=v_l$,
and assume that the critical vertices,
$c_L^{[h,j]}$, $c_R^{[h,j]}$, $c_L^{[j+1,l]}$, and $c_R^{[j+1,l]}$
have already been computed.
\begin{enumerate}
\item[(a)]
The L-critical vertex $c_L^{[h,l]}$ is either $c_L^{[h,j]}$ or $c_L^{[j+1,l]}$.
\item[(b)]
The R-critical vertex  $c_R^{[h,l]}$ is either $c_R^{[h,j]}$ or $c_R^{[j+1,l]}$.
\end{enumerate}
\end{lemma}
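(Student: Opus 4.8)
The plan is to prove that the L-critical vertex of the concatenated subpath $P[v_h,v_l]$ is inherited from one of the two pieces; part (b) follows by a symmetric argument, so I would only write up part (a). Recall that in the uniform-capacity case, all edges have the same capacity, say $c$, so $c(v_a,v_b)=c$ for every pair $v_a\prec v_b$. Consequently, for any sink $x\succ v_l$ and any vertex $v_g\in V[v_h,v_l]$, equation~(\ref{eqn:leftcost3}) simplifies to
\begin{equation*}
\theta_L(x,[v_h,v_g]) = d(v_g,x)\tau + \frac{W[v_h,v_g]}{c},
\end{equation*}
and $c_L^{[h,l]}$ is by definition the vertex $v_g\in V[v_h,v_l]$ maximizing this expression.

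The key observation is a comparison argument. Fix the reference point $x$ to be $v_l^+$ (by Proposition~1 the critical vertex w.r.t. the edge $(v_l,v_{l+1})$ is well-defined and independent of the exact point on that edge). I would split $V[v_h,v_l]$ into the left block $V[v_h,v_j]$ and the right block $V[v_{j+1},v_l]$. Among vertices $v_g$ in the right block, the maximizer of $\theta_L(v_l^+,[v_h,v_g])$ is exactly $c_L^{[j+1,l]}$: indeed, for $v_g\in V[v_{j+1},v_l]$ we have $W[v_h,v_g]=W[v_h,v_j]+W[v_{j+1},v_g]$, so $\theta_L(v_l^+,[v_h,v_g]) = \theta_L(v_l^+,[v_{j+1},v_g]) + W[v_h,v_j]/c$; the additive constant $W[v_h,v_j]/c$ does not affect which vertex achieves the maximum, so the maximizing $v_g$ in the right block is precisely the L-critical vertex of $P[v_{j+1},v_l]$ w.r.t. $v_l^+$, namely $c_L^{[j+1,l]}$. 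Among vertices $v_g$ in the left block, I claim the maximizer is $c_L^{[h,j]}$. Here the subtlety is that $c_L^{[h,j]}$ is defined w.r.t. the point $v_j^+$, not $v_l^+$. But for $v_g\in V[v_h,v_j]$, moving the sink from $v_j^+$ to $v_l^+$ adds the common term $d(v_j,v_l)\tau$ (up to the infinitesimal difference between $v_j^+$ and $v_j$, which is harmless) to every $\theta_L(\cdot,[v_h,v_g])$; again this is an additive constant independent of $v_g$, so the maximizer over the left block is unchanged and equals $c_L^{[h,j]}$. Since $c_L^{[h,l]}$ is the overall maximizer over $V[v_h,v_l]$ — which is the better of the left-block maximizer and the right-block maximizer — it must be either $c_L^{[h,j]}$ or $c_L^{[j+1,l]}$.

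The routine bookkeeping I would still need to handle carefully: the distinction between $v_j$ and $v_j^+$ (resp. $v_l$ and $v_l^+$) in the distance terms — these introduce only infinitesimal corrections that cancel when comparing two vertices on the same side, so the critical vertex is unaffected; and the degenerate cases where one of the blocks is empty (e.g. if a child spans a single leaf), which are immediate. I do not expect any real obstacle here: the whole argument is that $\theta_L$ decomposes, when the sink lies to the right of the concatenation point, into a per-vertex term plus a constant offset that is uniform across each block, which is exactly the structural simplification that uniform capacity buys us. The main "idea" to state clearly is this offset-invariance of the argmax; everything else is arithmetic.
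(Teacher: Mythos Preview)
Your proposal is correct and matches the paper's approach. The paper does not give a formal proof (the lemma is cited from \cite{higashikawa2015a}) but the sentence following the lemma statement is exactly your argument in compressed form: the left block's L-costs all shift by the additive constant $d(v_j,v_l)\tau$ and the right block's L-costs all shift by $W[v_h,v_j]/c$, so the argmax within each block is preserved and the overall L-critical vertex is whichever of $c_L^{[h,j]}$, $c_L^{[j+1,l]}$ has the larger updated cost.
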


For example,
to the cost of the L-critical cluster $C_L^{[h,j]}$ of $P[v_h,v_j]$,
we add the distance cost $d(v_j,v_l)\tau$,
and to the cost of the L-critical cluster $C_L^{[j+1,l]}$ of $P[v_{j+1},v_l]$ we just add
$W[v_h,v_j]/c$ to compute its new cost.
The L-critical cluster of the combined path $P[v_h,v_l]$ is whichever is larger.

\begin{lemma}\label{lem:uppEnvTree}
Given a dynamic path network with $n$ vertices and uniform edge capacities,
we can construct search tree ${\cal T}$ in $O(n)$ time and $O(n)$ space.
\end{lemma}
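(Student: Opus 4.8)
The plan is to mirror the construction and analysis used for the general edge-capacity case in Lemma~\ref{lem:searchTree}, but exploit the simplifications that arise when all capacities are equal. First I would build the balanced binary tree ${\cal T}$ whose leaves are $v_1,\ldots,v_n$ in left-to-right order; since balancing ignores vertex weights, this is just a standard balanced tree on $n$ leaves, constructible in $O(n)$ time and $O(n)$ space, with height $O(\log n)$. I would also precompute in $O(n)$ time the prefix-sum array $\{W[v_1,v_j]\}$ so that any $W[v_i,v_j]$ is available in $O(1)$ time, and record at each node $u$ the endpoints $v_L(u),v_R(u)$ together with the cumulative distance $d(v_L(u),v_R(u))$.

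Next I would populate ${\cal T}$ bottom up. At each leaf $u=v_i$, the L- and R-critical vertices of the trivial subpath $P[v_i,v_i]$ w.r.t.\ $v_i^+$ are $v_i$ itself, and the costs are $0$ (or the appropriate degenerate value), so the base case is immediate. For an interior node $u$ with children $u_l,u_r$, I invoke Lemma~\ref{lem:merge1}: the L-critical vertex $c_L$ of $P[v_L(u),v_R(u)]$ is one of the two already-stored L-critical vertices $c_L^{[h,j]}$, $c_L^{[j+1,l]}$ of the two child subpaths. To decide which, I combine their recorded costs — for the left child's L-critical cluster I add the distance term $d(v_R(u_l),v_R(u))\tau$, and for the right child's L-critical cluster I add the weight term $W[v_L(u_l),v_R(u_l)]/c$ (using the common capacity $c$) — both retrievable in $O(1)$ time from the stored cost, the prefix-sum array, and the distance array, and then take the larger. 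The R-critical vertex is handled symmetrically. Each node is thus processed in $O(1)$ time.

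Finally I would sum the work: the base cases cost $O(n)$ in total, and there are $O(n)$ interior nodes each costing $O(1)$, so the whole bottom-up pass is $O(n)$; together with the $O(n)$ preprocessing for the tree shape, prefix sums, and distances, the construction is $O(n)$ time and $O(n)$ space, proving the lemma. The only subtlety to get right — and the step I expect to require the most care rather than being the main obstacle — is verifying that the $O(1)$-time merge correctly accounts for the distance and weight shifts when a critical cluster straddles or lies entirely within one child, i.e.\ that the cost update formulas following Lemma~\ref{lem:merge1} are exactly the ones needed; but since uniform capacity makes $c(a,b)=c$ for all $a,b$, the $\min$ operations in (\ref{eq:lquery})--(\ref{eq:rquery}) collapse and the linear structure of $\theta_L,\theta_R$ in the distance makes these updates straightforward, so there is no genuine algorithmic hurdle here.
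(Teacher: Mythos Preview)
Your proposal is correct and is exactly the argument the paper has in mind: the paper does not give a formal proof of this lemma but sketches precisely this bottom-up construction in the paragraph preceding it, using Lemma~\ref{lem:merge1} to merge critical vertices in $O(1)$ time per node via the distance shift $d(v_j,v_l)\tau$ and the weight shift $W[v_h,v_j]/c$. Your write-up simply makes explicit the $O(n)$ accounting that the paper leaves implicit.
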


Thanks to Lemma~\ref{lem:uppEnvTree},
Algorithm {\tt Isolate-subpath$(t,v_a)$} runs in $O(\log n)$ time.
We thus have
\begin{lemma}\label{lem:feasibility-uniform}
Given a dynamic path network with $n$ vertices and uniform edge capacities,
assume that its  search tree, ${\cal T}$, is available.
Then we can test its $(t,k)$-feasibility in $O(k\log n)$ time.
\end{lemma}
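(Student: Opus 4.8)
The plan is to imitate the proof of Lemma~\ref{lem:feasibility}, simply replacing the $O(\log^3 n)$ cost of a single invocation of {\tt Isolate-subpath} with $O(\log n)$. Concretely, starting at the leftmost vertex $v_1$, I would invoke {\tt Isolate-subpath$(t,v_1)$} to isolate the first maximal subpath on which one sink attains completion time $t$, and then repeat the call from the leftmost remaining vertex, at most $k$ times in total. Exactly as argued for Lemma~\ref{lem:feasibility}, the instance is $(t,k)$-feasible if and only if the last isolated subpath contains $v_n$. Hence the whole proof reduces to showing that {\tt Isolate-subpath$(t,v_a)$} runs in $O(\log n)$ time once the uniform-capacity search tree ${\cal T}$ of Lemma~\ref{lem:uppEnvTree} is available.

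For that bound I would argue as follows. The up-down traversal performed by {\tt Isolate-subpath$(t,v_a)$} visits only $O(\log n)$ nodes of ${\cal T}$, so it suffices to show that the $L$-test (resp.\ $R$-test) carried out at each visited node $u$ costs $O(1)$. The subpath tested at $u$ is the current prefix $P[v_a,v_R(u)]$ (resp.\ the current subpath $P[v_c,v_R(u)]$ in the third phase of the algorithm), and one step of the traversal --- moving from a node to its parent, or to a sibling's child --- changes this subpath by appending at most one canonical subpath $P[v_L(w),v_R(w)]$ spanned by some node $w$ of ${\cal T}$. At every node $w$ of ${\cal T}$ we have already recorded, bottom up, the L- and R-critical vertices of $P[v_L(w),v_R(w)]$ together with their costs; hence by Lemma~\ref{lem:merge1} the critical vertex and cost of the extended prefix are obtained from those of the old prefix and of $P[v_L(w),v_R(w)]$ in constant time (adding the appropriate distance term $d(\cdot,\cdot)\tau$ to the cost of the right part and the weight term $W[\cdot,\cdot]/c$ to the cost of the left part, and taking the larger of the two combinations). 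Thus the outcome of the $L$- or $R$-test at $u$ is available in $O(1)$ time after each traversal step, so the whole call runs in $O(\log n)$ time, which is precisely the remark already stated after Lemma~\ref{lem:uppEnvTree}.

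Putting these together, each of the at most $k$ invocations of {\tt Isolate-subpath} costs $O(\log n)$, so the $(t,k)$-feasibility test costs $O(k\log n)$; note that we never rebuild ${\cal T}$ between invocations --- we merely restart the traversal from the leftmost remaining vertex, and {\tt Isolate-subpath} correctly handles a starting vertex lying in the interior of a canonical subpath. I expect the only delicate point to be the incremental maintenance of the critical vertex of the growing prefix during the up-down traversal: one must check that each traversal step really appends at most one canonical subpath on the right, so that Lemma~\ref{lem:merge1} applies with $O(1)$ work per step, and that the symmetric bookkeeping goes through for the $R$-test phase; everything else is a direct transcription of the proof of Lemma~\ref{lem:feasibility}.
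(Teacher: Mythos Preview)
Your proposal is correct and follows exactly the approach the paper takes: the paper's entire proof is the one-line remark preceding the lemma (``Thanks to Lemma~\ref{lem:uppEnvTree}, Algorithm {\tt Isolate-subpath$(t,v_a)$} runs in $O(\log n)$ time''), together with the implicit reuse of the argument of Lemma~\ref{lem:feasibility}. You have simply spelled out why that remark holds, via the incremental $O(1)$ merge from Lemma~\ref{lem:merge1}, which is the intended justification.
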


Algorithm~{\tt Find-1sink$(v_i,v_j)$} also runs in $O(\log n)$ time,
which implies
\begin{lemma}\label{lem:1-sink}
Given a dynamic path network with $n$ vertices and uniform edge capacities,
assume that its  search tree, ${\cal T}$, is available.
Then we can find the $1$-sink on subpath $P[v_i,v_j]$ in $O(\log n)$ time.
\end{lemma}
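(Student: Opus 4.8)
The plan is to mirror the analysis of Algorithm~{\tt Find-1sink$(v_i,v_j)$} from Lemma~\ref{lem:1subpathB}, but with the per-node cost dropped from $O(\log^2 n)$ to $O(1)$ thanks to the uniform-capacity machinery. First I would recall that under uniform edge capacities, Lemma~\ref{lem:merge1} lets us precompute, at every node $u$ of ${\cal T}$, the L- and R-critical vertices of $P[v_L(u),v_R(u)]$ w.r.t. $v_R(u)^+$ together with their costs, and by Lemma~\ref{lem:uppEnvTree} this whole augmented tree is built in $O(n)$ time and space. The key point is that once these are stored, the cost of the L-critical vertex of a subpath $P[v_i,v_R(u_l)]$ (and symmetrically the R-critical vertex of $P[v_L(u_r),v_j]$) can be assembled in $O(\log n)$ time: decompose the subpath into $O(\log n)$ maximal canonical subtrees, read off each stored critical vertex and cost in $O(1)$, and combine them pairwise left-to-right using the constant-time merge of Lemma~\ref{lem:merge1} (adding a distance term $d(\cdot,\cdot)\tau$ to the left cluster and a weight term $W[\cdot,\cdot]/c$ to the right cluster, exactly as in the example following Lemma~\ref{lem:merge1}).

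Next I would run Algorithm~{\tt Find-1sink$(v_i,v_j)$} verbatim. In Step~1 we locate the meeting node $u$ of $\pi(v_i,\rho)$ and $\pi(v_j,\rho)$ in $O(\log n)$ time. In Step~2, for the current $u_l,u_r$ we need the cost function (linear in the position of the sink on the edge $(v_R(u_l),v_L(u_r))$) of the L-critical vertex of $P[v_i,v_R(u_l)]$ and of the R-critical vertex of $P[v_L(u_r),v_j]$; by the previous paragraph each is obtained in $O(\log n)$ time, and comparing two linear functions on an interval and solving for their intersection is $O(1)$. Step~3 descends one level, so the loop executes $O(\log n)$ times. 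Charging $O(\log n)$ per iteration for the cost assembly gives an $O(\log^2 n)$ bound; to get the claimed $O(\log n)$ I would instead maintain the canonical decomposition incrementally as $u$ moves down the tree — each descent from $u$ to a child changes the decomposition of $P[v_i,v_R(u_l)]$ or $P[v_L(u_r),v_j]$ by only $O(1)$ canonical pieces, so the running cost/partial-merge prefix can be updated in $O(1)$ per step, yielding $O(\log n)$ total. Correctness of the search itself is inherited directly from Lemma~\ref{lem:1subpathB}, since the uniform case is just a special case with faster primitives.

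The main obstacle I anticipate is exactly this last accounting subtlety: a naive reading of {\tt Find-1sink} with the $O(\log n)$ cost-assembly subroutine only delivers $O(\log^2 n)$, so one must argue that the monotone descent of $u$ lets the decomposition and its running merged cost be updated in amortized (indeed worst-case) $O(1)$ per level. This relies on the fact that when $u$ moves from a node to one of its children, the set of canonical subtrees covering $P[v_i,v_R(u_l)]$ gains at most a constant number of new pieces (the sibling subtrees hanging off the single edge traversed) and the merge of Lemma~\ref{lem:merge1} is associative enough to extend a prefix-combination in $O(1)$. Once that is in place, the rest is routine: $O(n)$ preprocessing (already granted), $O(\log n)$ to find $u$, and $O(\log n)$ iterations of $O(1)$ work each, for $O(\log n)$ total, matching the statement.
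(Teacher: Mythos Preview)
Your approach matches the paper's: run {\tt Find-1sink$(v_i,v_j)$} and argue that, thanks to the constant-time merge of Lemma~\ref{lem:merge1}, each of its $O(\log n)$ iterations costs $O(1)$. The paper itself gives no details beyond the one-line assertion ``Algorithm~{\tt Find-1sink$(v_i,v_j)$} also runs in $O(\log n)$ time,'' so your write-up is already more careful than the paper's own treatment, and you correctly flag the $O(\log^2 n)$ versus $O(\log n)$ subtlety.

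There is, however, a loose end in your incremental argument. You write that when $u$ descends to a child, the canonical decomposition of $P[v_i,v_R(u_l)]$ ``gains at most a constant number of new pieces,'' so a prefix-merge can be ``extended'' in $O(1)$. But in fact, when $u$ moves to $u_l$ the left subpath \emph{shrinks} (it loses the piece ${\cal T}((u_l)_r)$), and when $u$ moves to $u_r$ the right subpath shrinks; the merge of Lemma~\ref{lem:merge1} is not invertible, so ``extending a prefix'' does not cover this case. Your argument as written handles only the growing side.

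The fix is easy and in the spirit of what you sketched. Maintain, instead, the ``outer'' contributions: the L-critical vertex and cost of $P[v_i,v_L(u)-1]$ and the R-critical vertex and cost of $P[v_R(u)+1,v_j]$. When $u$ descends to $u_l$, $v_L(u)$ is unchanged while the outer-right part absorbs ${\cal T}(u_r)$; when $u$ descends to $u_r$, the outer-left part absorbs ${\cal T}(u_l)$. Either way only one side grows, by exactly one stored subtree, so Lemma~\ref{lem:merge1} updates it in $O(1)$. The needed quantities $\Theta_L(v_i,v_R(u_l))$ and $\Theta_R(v_L(u_r),v_j)$ are then obtained by a single $O(1)$ merge of the current outer part with the value stored at $u_l$ (resp.\ $u_r$). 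The only remaining case is while $v_i$ (resp.\ $v_j$) still lies inside ${\cal T}(u)$; but then $u$ is on $\pi(v_i,\rho)$ (resp.\ $\pi(v_j,\rho)$), and a one-time $O(\log n)$ walk up from $v_i$ (resp.\ $v_j$) precomputes $\Theta_L(v_i,v_R(u'))$ for every such $u'$, again via Lemma~\ref{lem:merge1}. With this bookkeeping in place, each of the $O(\log n)$ descent steps is genuinely $O(1)$, and the total is $O(\log n)$ as claimed.
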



\section{Optimization}\label{sec:optimization}
\begin{lemma}{\rm \cite{arumugam2016}}\label{lem:golin}
If $(t,k)$-feasibility can be tested in $T(t,k)$ time,
then the $k$-sink can be found in $O(T(t,k) \cdot k\log n)$ time,
excluding the preprocessing time.
\end{lemma}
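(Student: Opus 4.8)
The plan is to find the optimal completion time $t^*$, defined as the minimum $t$ for which the instance is $(t,k)$-feasible, using a sequence of $O(k\log n)$ feasibility tests organized into $k$ phases. First I would record two structural facts. \emph{Monotonicity}: if an instance is $(t,k)$-feasible it is $(t',k)$-feasible for every $t'\ge t$, so $t^*$ is well defined and equals the optimal $k$-sink evacuation time; consequently, for any value $V$, a single $(V,k)$-feasibility test decides whether $t^*\le V$ or $t^*>V$. \emph{Candidate structure}: writing $\Theta^*_1([v_p,v_q]) = \min_{s\in P[v_p,v_q]} \Theta(s,[v_p,v_q])$ for the optimal one-sink cost of a contiguous block (computable by {\tt Find-1sink$(v_p,v_q)$} of Lemma~\ref{lem:1subpathB}), the greedy left-to-right test is feasible at threshold $t$ exactly when the path splits into at most $k$ maximal blocks, each block $[v_{a_i},v_{r_i}]$ satisfying $r_i=\max\{q:\Theta^*_1([v_{a_i},v_q])\le t\}$. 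A standard exchange argument then shows that $t^*$ equals $\Theta^*_1$ of the binding block of this greedy partition, so $t^*\in\{\Theta^*_1([v_p,v_q]):p\le q\}$.

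Next I would run the greedy partition parametrically, maintaining an interval $(t_{\mathrm{lo}},t_{\mathrm{hi}}]$ known to contain $t^*$ together with the invariant that the already-determined boundaries $r_1,\dots,r_{i-1}$ are constant for every $t$ in this interval. In phase $i$ the left endpoint $a_i=r_{i-1}+1$ is fixed, so the candidate breakpoints $\Theta^*_1([v_{a_i},v_q])$ for $q=a_i,a_i+1,\dots,n$ form a list that is nondecreasing in $q$ (adjoining vertices and re-placing the sink cannot decrease the block cost). I would binary search over $q$: at each probed $q$ I compute $V=\Theta^*_1([v_{a_i},v_q])$ with {\tt Find-1sink}, run one $(V,k)$-feasibility test to decide $t^*\le V$ or $t^*>V$, and accordingly shrink $(t_{\mathrm{lo}},t_{\mathrm{hi}}]$. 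This locates $t^*$ between two consecutive candidates $\Theta^*_1([v_{a_i},v_{r_i}])$ and $\Theta^*_1([v_{a_i},v_{r_i+1}])$, which both pins the next boundary $r_i$ (restoring the invariant for one more block) and, when the current block is the binding one, records $t^*$ itself as the smallest value that tested feasible. After at most $k$ phases the greedy partition reaches $v_n$, the final value of $t_{\mathrm{hi}}$ equals $t^*$, and the optimal sinks are read off from the $k$ fixed blocks via {\tt Find-1sink}.

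For the running time, each phase performs a binary search of $O(\log n)$ steps, and each step costs one $(V,k)$-feasibility test ($T(t,k)$ time) plus one {\tt Find-1sink} evaluation, the latter being dominated by $T(t,k)$. With $k$ phases this is $O(k\log n)$ feasibility tests, i.e. $O(T(t,k)\cdot k\log n)$ time overall, excluding the preprocessing that builds ${\cal T}$.

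The step I expect to be the main obstacle is maintaining the chaining invariant cleanly: the partition that defines the candidate breakpoints depends on $t^*$, yet $t^*$ is only discovered at the end, creating an apparent circularity. I would resolve it by the interval-narrowing argument above, in which each phase fixes exactly one more boundary and only ever shrinks $(t_{\mathrm{lo}},t_{\mathrm{hi}}]$, so earlier boundaries stay constant and $a_i$ is unambiguous when phase $i$ begins. The accompanying point that needs care is verifying that $t^*$ is genuinely one of the probed candidate values $\Theta^*_1([v_{a_i},v_q])$ (so that it is captured exactly rather than merely bracketed), which follows from the exchange argument identifying the binding greedy block.
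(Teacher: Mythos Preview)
The paper does not actually prove this lemma; it is quoted from \cite{arumugam2016} and used as a black box. Your write-up is a faithful reconstruction of that argument: it is Megiddo-style parametric search applied to the greedy feasibility routine, where each of the $k$ block boundaries is resolved by a binary search over the monotone sequence $q\mapsto\Theta^*_1([v_{a_i},v_q])$, and each probe is settled by one genuine $(V,k)$-feasibility call. The interval-narrowing invariant you describe is exactly how the circularity between ``which block is binding'' and ``what is $t^*$'' is handled in the source, and your exchange argument for $t^*\in\{\Theta^*_1([v_p,v_q])\}$ is the right justification that the search terminates with the exact optimum rather than an open bracket.

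One small point to make explicit: you assert that a {\tt Find-1sink} evaluation is dominated by $T(t,k)$. This is true in the present paper (Lemma~\ref{lem:1subpathB} gives $O(\log^3 n)$ for {\tt Find-1sink} versus $T(t,k)=O(k\log^3 n)$ from Lemma~\ref{lem:feasibility}), and indeed in \cite{arumugam2016} the feasibility test itself repeatedly calls a 1-sink subroutine, so the domination is structural. But since the lemma is phrased abstractly in terms of an arbitrary $T(t,k)$, it would be cleaner to state this as an explicit side assumption (``the 1-sink cost on any subpath can be evaluated in time $O(T(t,k))$'') rather than leave it implicit.
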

By Lemma~\ref{lem:searchTree}
it takes $O(n\log n)$ time to construct $\cal T$ with weight and capacity data,
and $T(t,k)=O(k\log^3 n)$ by Lemma~\ref{lem:feasibility}.
We thus have
\begin{theorem}\label{thm:optimization-uniformA}
Given a dynamic path network with $n$ vertices,
we can find an optimal $k$-sink in $O(n\log n+k^2\log^4 n)$ time.
\end{theorem}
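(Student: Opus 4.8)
The plan is to assemble Theorem~\ref{thm:optimization-uniformA} from three ingredients that are already in place: the CC-tree construction of Lemma~\ref{lem:searchTree}, the feasibility test of Lemma~\ref{lem:feasibility}, and the generic optimization-from-feasibility reduction of Lemma~\ref{lem:golin}. The argument is essentially a bookkeeping of running times, so the proof is short.

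First I would run the preprocessing phase once: build the critical cluster tree $\mathcal{T}$, storing at every node the two sorted capacity lists, the two weight tables, and the two capacity tables. By Lemma~\ref{lem:searchTree} this costs $O(n\log n)$ time, and it is kept as a separate additive term, never repeated.

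Next, with $\mathcal{T}$ available, Lemma~\ref{lem:feasibility} provides a $(t,k)$-feasibility oracle running in $T(t,k)=O(k\log^3 n)$ time; internally this is the loop that calls {\tt Isolate-subpath}$(t,\cdot)$ at most $k$ times, each call costing $O(\log^3 n)$ by Lemma~\ref{lem:1subpathA}, and then checks whether the rightmost vertex $v_n$ is covered. Feeding this oracle into Lemma~\ref{lem:golin}, the minimum feasible completion time $t^\ast$, together with an optimal placement of the $k$ sinks, is obtained in $O(T(t,k)\cdot k\log n)$ time, excluding preprocessing. Substituting $T(t,k)=O(k\log^3 n)$ gives $O(k^2\log^4 n)$ for this search phase, and adding the $O(n\log n)$ preprocessing yields the claimed $O(n\log n+k^2\log^4 n)$ bound.

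There is no genuine obstacle here, since all the nontrivial work has already been discharged in Lemmas~\ref{lem:searchTree}, \ref{lem:feasibility}, and \ref{lem:golin}. The only point worth a moment's care is precisely the one that makes the two terms additive rather than multiplicative: the $O(n\log n)$ CC-tree construction is performed a single time and then reused across all the feasibility calls invoked (implicitly $O(k\log n)$ of them) by Lemma~\ref{lem:golin}. I would state this explicitly so that it is clear why the running time is $O(n\log n)+O(k^2\log^4 n)$ and not the product $O(n k^2 \log^5 n)$.
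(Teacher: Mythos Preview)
Your proposal is correct and follows exactly the paper's own argument: invoke Lemma~\ref{lem:searchTree} for the $O(n\log n)$ preprocessing, Lemma~\ref{lem:feasibility} for the $T(t,k)=O(k\log^3 n)$ feasibility test, and plug into Lemma~\ref{lem:golin} to obtain the $O(k^2\log^4 n)$ search cost. The explicit remark that preprocessing is done once (making the terms additive) is a nice clarification the paper leaves implicit.
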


Based on Lemmas~\ref{lem:uppEnvTree} and \ref{lem:feasibility-uniform},
Megiddo's theorem in~\cite{megiddo1979} implies 
(it also follows Lemma~\ref{lem:golin})
\begin{theorem}\label{thm:optimization-uniformB}
Given a dynamic path network with $n$ vertices and uniform edge capacities,
we can find an optimal $k$-sink in $O(n + k^2\log^2 n)$ time.
\end{theorem}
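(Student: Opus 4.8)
Theorem \ref{thm:optimization-uniformB} asserts that for uniform edge capacities we can find an optimal $k$-sink in $O(n + k^2\log^2 n)$ time. Here is how I would prove it.

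The plan is to obtain the bound by feeding the uniform‑capacity machinery already developed into a generic optimization scheme, namely Lemma~\ref{lem:golin} (equivalently, Megiddo's parametric‑search theorem~\cite{megiddo1979}).

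First I would build the search tree $\mathcal T$ for the given uniform‑capacity instance. By Lemma~\ref{lem:uppEnvTree} this costs only $O(n)$ time and $O(n)$ space. The reason this is $O(n)$ rather than the $O(n\log n)$ of the capacitated case (Lemma~\ref{lem:searchTree}) is that, under a common capacity, the L‑ and R‑critical vertices of a concatenation of two subpaths are recovered in $O(1)$ time from those of the two pieces (Lemma~\ref{lem:merge1}); consequently the bottom‑up construction does only $O(1)$ work per node of $\mathcal T$, i.e.\ $O(n)$ in total. I regard this as the crux of the whole statement: the improvement over the capacitated bound rests entirely on this constant‑time merge, so I would make sure that no hidden logarithmic factor (for instance from re‑sorting capacities, which is now unnecessary) creeps into the per‑node work.

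Next, with $\mathcal T$ available, Lemma~\ref{lem:feasibility-uniform} gives a $(t,k)$‑feasibility test running in $T(t,k)=O(k\log n)$ time: it performs at most $k$ calls to {\tt Isolate-subpath}$(t,\cdot)$, each of which, thanks to the constant‑time merges, costs only $O(\log n)$ instead of $O(\log^3 n)$. I would then invoke Lemma~\ref{lem:golin}, which turns a feasibility test running in $T(t,k)$ time into an algorithm that finds the optimal completion time $t^*$, together with $k$ sinks realizing it, in $O(T(t,k)\cdot k\log n)$ time, excluding preprocessing. Substituting $T(t,k)=O(k\log n)$ yields $O(k^2\log^2 n)$, and adding the $O(n)$ preprocessing of the first step gives the claimed $O(n+k^2\log^2 n)$. (Recovering the actual sink locations, once $t^*$ is known, costs at most one further feasibility test at $t=t^*$ to fix the subpath boundaries, which is absorbed in the same bound.)

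As an alternative to Lemma~\ref{lem:golin} I would note that the same bound follows from Megiddo's theorem~\cite{megiddo1979}: the control flow of the feasibility test branches only on comparisons of costs, and since $\theta_L,\theta_R$ are affine in the distance to the sink, every such comparison reduces to testing the sign of an affine function of the parameter $t$; hence the test can be simulated at the unknown optimum, and the $O(\log n)$‑depth, $O(k)$‑width structure of each {\tt Isolate-subpath} call keeps the resolution overhead within $O(k^2\log^2 n)$. The only delicate point along this route is verifying that affineness of the tested quantities in $t$, which is immediate from the definitions \eqref{eqn:leftcost3}–\eqref{eqn:rightcost3}.
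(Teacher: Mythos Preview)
Your proposal is correct and follows essentially the same approach as the paper: build $\mathcal T$ in $O(n)$ time via Lemma~\ref{lem:uppEnvTree}, obtain a $T(t,k)=O(k\log n)$ feasibility test via Lemma~\ref{lem:feasibility-uniform}, and then apply Lemma~\ref{lem:golin} (equivalently Megiddo's theorem) to get $O(n+k^2\log^2 n)$. The paper's own justification is exactly this chain of lemmas, only stated more tersely.
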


\subsection{Sorted matrix approach}\label{sec:sorted matrix}
Let $OPT(l, r)$ denote the evacuation time for the optimal 1-sink on subpath $P[v_l, v_r]$.
Define an $n\times n$ matrix $A$ whose entry $(i, j)$ entry is given by
\begin{equation}
A[i,j] = \left\{ \begin{array}{ll}\label{eqn:Aij}
                     {\it OPT}(n-i+1, j) 	&\mbox{~~if~}   n-i+1\leq  j\\
                     0 				&\mbox{~~otherwise.} 
                      \end{array}
                      \right.                    
\end{equation}

It is clear that
matrix $A$ includes ${\it OPT}(l, r)$  for every pair of integers $l$ and $r$ such that $1 \le l \le r \le n$.
There exists a pair of integers  $l$ and $r$ such that ${\it OPT}(l, r)$ is the evacuation time
for the optimal $k$-sink on the whole path. Then $k$-sink location problem can be written as: 
``Find the smallest $A[i,j]$ such that the given problem instance is $A[i,j]$-feasible.''

A matrix is called a {\em sorted matrix} if each row and column of it is sorted
in the nondecreasing order.
In \cite{frederickson1991a,frederickson1983},
Frederickson et al. show how to search for such a minimum in a sorted matrix.
The following lemma is implicit in their papers.

\begin{lemma}\label{lem:frederickson}
Suppose that $A[i,j]$ can be computed in $f(n)$ time, and feasibility can be tested in $g(n)$ time.
Then we can solve the $k$-sink problem in $O(n f(n) + g(n) \log n)$ time. 
\end{lemma}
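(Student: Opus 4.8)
The plan is to apply the sorted-matrix selection technique of Frederickson and Johnson to the matrix $A$ defined in~(\ref{eqn:Aij}). First I would verify that $A$ is indeed a sorted matrix, i.e., that each row and each column is nondecreasing. Since $A[i,j] = {\it OPT}(n-i+1, j)$ when $n-i+1 \le j$ (and $0$ otherwise), increasing $j$ by one extends the subpath $P[v_{n-i+1}, v_j]$ to the right, so the optimal 1-sink evacuation time cannot decrease; likewise increasing $i$ by one decreases $n-i+1$, extending the subpath to the left, which again cannot decrease ${\it OPT}$. The zero entries sit in the lower-left triangle and do not violate monotonicity along rows or columns because every nonzero entry dominates them in the relevant direction (this needs a brief check at the boundary between the zero region and the nonzero region). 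So $A$ is a sorted matrix.

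Next I would recall the key property of Frederickson--Johnson selection: to find the smallest entry of an $n \times n$ sorted matrix satisfying a monotone predicate, their algorithm performs $O(\log n)$ rounds, and in each round it evaluates the predicate on $O(1)$ candidate entries while also computing $O(n)$ matrix entries (more precisely, over the whole run it evaluates $O(n)$ matrix entries and $O(\log n)$ predicate tests, or a balanced variant thereof). The predicate here is ``the problem instance is $A[i,j]$-feasible,'' which is monotone in the value $A[i,j]$: if it is $t$-feasible then it is $t'$-feasible for all $t' \ge t$. So the target value — the smallest $A[i,j]$ for which the instance is $A[i,j]$-feasible — equals $t^*$, the optimal $k$-sink evacuation time, because some ${\it OPT}(l,r)$ entry realizes $t^*$ exactly, and this is the reasoning already given in the text just before the lemma.

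Then I would assemble the running time. Each matrix entry $A[i,j]$ costs $f(n)$ time by hypothesis, and the selection procedure touches $O(n)$ entries, contributing $O(n f(n))$. Each feasibility test costs $g(n)$ time, and the procedure makes $O(\log n)$ of them, contributing $O(g(n)\log n)$. Summing gives $O(n f(n) + g(n)\log n)$, as claimed. I would state explicitly that this matches ``implicit in'' Frederickson--Johnson because their papers phrase the result for a fixed generic oracle rather than for the 1-sink/feasibility split, so the contribution here is just the bookkeeping of which operation is charged to which oracle.

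The main obstacle, modest as it is, will be getting the boundary behavior of the zero entries right: I must confirm that padding the lower-left triangle with $0$ keeps both rows and columns sorted, i.e., that no nonzero entry is ever smaller than a $0$ to its left in the same row or below it in the same column, and that the $0$-to-nonzero transition occurs at a single place per row and per column (a staircase), which is exactly what the condition $n-i+1 \le j$ gives. One should also note that we only ever need to evaluate $A[i,j]$ at its nonzero entries during selection — or, equivalently, that the $0$ entries are never the answer since ${\it OPT}$ of any nonempty subpath is positive whenever some vertex has positive weight — so no genuine 1-sink computation is wasted on the padding. Apart from this, the proof is a direct citation-and-accounting argument.
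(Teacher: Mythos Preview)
Your proposal is correct and follows exactly the approach the paper takes: the paper does not prove this lemma at all but simply states it is ``implicit in'' the Frederickson--Johnson sorted-matrix selection papers~\cite{frederickson1991a,frederickson1983}, and your sketch fills in precisely the standard details (monotonicity of $A$, $O(n)$ entry evaluations, $O(\log n)$ oracle calls) that those citations supply.
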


We have $f(n)=O(\log^3 n)$ by Lemma~\ref{lem:1subpathB},
and $g(n)$ can be $O(n \log^2 n)$ by scanning path $P$ from left to right.
Lemma~\ref{lem:frederickson} thus implies
\begin{theorem}\label{thm:optimization-general}
Given a dynamic path network with $n$ vertices and general edge capacities,
we can find an optimal $k$-sink in $O(n \log^3 n)$ time.
\end{theorem}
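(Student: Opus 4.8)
The plan is to instantiate Lemma~\ref{lem:frederickson} with the matrix $A$ from~(\ref{eqn:Aij}). First I would check that $A$ is a sorted matrix: in a fixed row the nonzero entries ${\it OPT}(n-i+1,\cdot)$ are nondecreasing because extending a subpath to the right can only increase the cost of its optimal 1-sink, and they are preceded by the zero entries; the columns are symmetric, since increasing $i$ with $j$ fixed moves the left endpoint $v_{n-i+1}$ leftwards and again only increases ${\it OPT}$. Then I would argue that $t^*$, the optimal $k$-sink evacuation time, lies in $A$: an optimal assignment partitions the evacuees into $k$ contiguous subpaths, and on the subpath attaining the maximum evacuation time the sink must sit at that subpath's optimal 1-sink position (otherwise the overall cost could be lowered), so $t^* = {\it OPT}(l,r) = A[n-l+1,r]$ for some $1\le l\le r\le n$. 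Since the feasible completion times form a ray $[t^*,\infty)$ and $t^*$ itself appears in $A$, the smallest $A[i,j]$ for which the instance is $A[i,j]$-feasible is exactly $t^*$, which is precisely what Frederickson's matrix search returns.

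It remains to supply the two time bounds. For $f(n)$, an entry $A[i,j]$ with $n-i+1\le j$ is evaluated by running {\tt Find-1sink}$(v_{n-i+1},v_j)$ and reading off the evacuation time of the returned sink, which is $O(\log^3 n)$ by Lemma~\ref{lem:1subpathB}; the zero entries are free, so $f(n)=O(\log^3 n)$. For $g(n)$, instead of the $O(k\log^3 n)$ test of Lemma~\ref{lem:feasibility} I would test $(t,k)$-feasibility by a single left-to-right sweep over the vertices: keeping the index of the current group's left endpoint and of the current sink, at each newly scanned vertex I perform one $L$-test (while still locating the next sink, as in the definition of $l_1$) or one $R$-test (while extending the current group toward $r_j$), each costing $O(\log^2 n)$ as established in Sec.~\ref{sec:capacitated}; counting the sinks used against $k$ decides feasibility. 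Because these tests are monotone in the subpath, each vertex triggers only $O(1)$ of them, so the sweep runs in $O(n\log^2 n)$. Substituting $f(n)=O(\log^3 n)$ and $g(n)=O(n\log^2 n)$ into Lemma~\ref{lem:frederickson} yields $O(n f(n) + g(n)\log n) = O(n\log^3 n + n\log^2 n\cdot\log n) = O(n\log^3 n)$, and building $\cal T$ beforehand costs only $O(n\log n)$ by Lemma~\ref{lem:searchTree}, which is absorbed.

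The step that needs the most care is the $O(n\log^2 n)$ feasibility sweep: I must realize the greedy ``push each sink as far right as possible'' rule correctly in an incremental scan, make sure the handover from finishing one group (the last successful $L$-test) to starting the next (the first $R$-test against the freshly fixed sink) re-examines at most one extra vertex, and detect a vertex whose own evacuees already cannot be cleared within $t$ so that the test correctly returns ``no''. The remaining pieces — the sorted-matrix property of $A$ and the fact that $t^*\in A$ — are routine monotonicity arguments, and the matrix-search guarantee itself is quoted directly from Lemma~\ref{lem:frederickson}.
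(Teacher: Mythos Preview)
Your proposal is correct and follows essentially the same route as the paper: both invoke Lemma~\ref{lem:frederickson} on the matrix $A$, take $f(n)=O(\log^3 n)$ from Lemma~\ref{lem:1subpathB}, and obtain $g(n)=O(n\log^2 n)$ by a single left-to-right scan rather than the $O(k\log^3 n)$ test of Lemma~\ref{lem:feasibility}. You simply fill in details the paper leaves implicit---the monotonicity of $A$ in each row and column, the presence of $t^*$ among the entries, and the bookkeeping that keeps the scan to $O(1)$ $L$-/$R$-tests per vertex.
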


In the uniform capacity case,
we can show that $f(n)=O(\log n)$,
and $g(n)$ can be $O(n)$ by scanning the path from left to right.
Lemma~\ref{lem:frederickson} thus implies
\begin{theorem}\label{thm:optimization-uniform2}
Given a dynamic path network with $n$ vertices and uniform edge capacities,
we can find the $k$-sink in $O(n \log n)$ time.
\end{theorem}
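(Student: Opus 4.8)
The plan is to apply Lemma~\ref{lem:frederickson} with the uniform-capacity versions of the two building blocks. First I would recall that in the uniform-capacity case, by Lemma~\ref{lem:uppEnvTree} the search tree ${\cal T}$ can be built in $O(n)$ time and $O(n)$ space, and by Lemma~\ref{lem:1-sink} we can find the $1$-sink on any subpath $P[v_l,v_r]$ in $O(\log n)$ time once ${\cal T}$ is available. Hence computing a single entry $A[i,j]$ of the matrix defined in~(\ref{eqn:Aij})—which is either ${\it OPT}(n-i+1,j)$ or $0$—costs $f(n)=O(\log n)$ after the one-time $O(n)$ preprocessing. This is the first key ingredient.

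Next I would verify that $A$ is a sorted matrix, so that the Frederickson--Johnson machinery underlying Lemma~\ref{lem:frederickson} actually applies. The relevant monotonicity is that ${\it OPT}(l,r)$ is nondecreasing as the subpath $P[v_l,v_r]$ grows, i.e., ${\it OPT}(l,r)\le {\it OPT}(l',r)$ whenever $l'\le l$ and ${\it OPT}(l,r)\le {\it OPT}(l,r')$ whenever $r'\ge r$; this holds because any sink placement for the larger subpath, restricted to the smaller one, still evacuates everyone within the same time, so the optimal time on the smaller subpath can only be smaller. Translating through the index substitution $i\mapsto n-i+1$ in~(\ref{eqn:Aij}), and using that the zero entries occupy exactly the region $n-i+1>j$ (the ``upper-left'' triangle in the $(i,j)$ coordinates), one checks that each row and each column of $A$ is nondecreasing. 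I would state this as a short monotonicity observation rather than belabor it.

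Then I would handle the feasibility-testing cost $g(n)$. Scanning the path from left to right and greedily isolating maximal subpaths—invoking {\tt Isolate-subpath}$(t,v_a)$ repeatedly as in Lemma~\ref{lem:feasibility-uniform}, but without charging the $k$ factor because the total work of one full left-to-right sweep touches each node of ${\cal T}$ a bounded number of times—gives a $(t,k)$-feasibility test in $g(n)=O(n)$ time in the uniform case. (Equivalently, one can run the $O(k\log n)$ test of Lemma~\ref{lem:feasibility-uniform} and observe $k\le n$, but the direct linear-sweep bound is cleaner and is what Lemma~\ref{lem:frederickson} wants.) Plugging $f(n)=O(\log n)$ and $g(n)=O(n)$ into Lemma~\ref{lem:frederickson} yields a running time of $O(n\cdot\log n + n\log n)=O(n\log n)$, excluding the $O(n)$ preprocessing, which is absorbed. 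Adding the observation that the smallest feasible $A[i,j]$ is exactly the optimal $k$-sink evacuation time (since some pair $(l,r)$ realizes it, as noted in the text preceding Lemma~\ref{lem:frederickson}) completes the argument.

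The main obstacle I anticipate is not any single computation but making the sorted-matrix reduction fully rigorous: one must confirm both the monotonicity of ${\it OPT}$ and that the particular padding with zeros in~(\ref{eqn:Aij}) preserves the row/column sortedness and does not create a spurious minimum (the zeros are never $A[i,j]$-feasible for the actual instance unless $k$ is trivially large, so the search correctly ignores them). Everything else is a direct substitution into Lemmas~\ref{lem:uppEnvTree}, \ref{lem:1-sink}, and \ref{lem:frederickson}.
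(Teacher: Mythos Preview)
Your proposal is correct and follows exactly the same route as the paper: plug $f(n)=O(\log n)$ (via Lemma~\ref{lem:1-sink}) and $g(n)=O(n)$ (via a left-to-right scan) into Lemma~\ref{lem:frederickson}, after the $O(n)$ preprocessing of Lemma~\ref{lem:uppEnvTree}. You simply make explicit the sortedness of $A$ and the correctness of the zero padding, which the paper leaves implicit.
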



\section{Conclusion and discussion}\label{sec:conclusion}
We have shown that on dynamic path networks with $n$ vertices,
the $k$-sink can be found in $O(\min\{n + k^2\log^4n, n\log^3n\})$ time,
which is sub-quadratic.
If the edges have the same capacity,
we can solve the $k$-sink problem in $O(\min\{n + k^2\log^2n,n\log n\})$ time.
These results improve upon the previously best algorithms~\cite{arumugam2016,higashikawa2015a}
for all values of $k$.

\bibliographystyle{splncs03}
\bibliography{stacs2017}
\end{document}